\newcommand{\defeq}{\stackrel{\scriptscriptstyle\text{def}}{=}}
\newcommand{\ie}{\text{i.e.}\xspace}
\newcommand{\N}{\mathbb{N}}                    
\renewcommand{\vec}[1]{\bm{#1}}                
\newcommand{\set}[1]{\left\{#1\right\}}        
\newcommand{\size}[1]{\mathrm{size}(#1)}       
\newcommand{\multiset}[1]{\Lbag#1\Rbag}        
\newcommand{\norm}[1]{\lVert#1\rVert}          
\newcommand{\support}[1]{\norm{#1}} 
\newcommand{\pre}{\mathit{pre}} 
\newcommand{\post}{\mathit{post}} 
\newcommand{\preset}[1]{{}^\bullet #1}  
\newcommand{\postset}[1]{{#1}^\bullet}  
\newcommand{\prestar}{\mathit{pre}^*}
\newcommand{\poststar}{\mathit{post}^*}
\newcommand{\cube}{\mathcal{C}}
\newcommand{\bio}{BIO} 
\newcommand{\bioLong}{branching immediate observation} 
\newcommand{\net}{N}
\newcommand{\trans}[1]{\xrightarrow{#1}}       
\newcommand{\placeCount}{n}
\newcommand{\firingSequence}{\sigma}
\newcommand{\targetMarking}{M}
\newcommand{\sourceMarking}{M'}
\newcommand{\agLength}[1]{|{#1} |_a} 
\newcommand{\sourceSize}{m'}
\newcommand{\targetSize}{m}
\newcommand{\dec}[1]{\widehat{#1}} 
\newcommand{\bunchreach}{\mathcal{R}_{p,i}}
\newcommand{\shortpath}{\rho}
\definecolor{lightred}{rgb}{1,0.8,0.8}
\title{Flatness and Complexity of Immediate Observation Petri Nets}
\author{Mikhail Raskin}{Technical University of Munich, Munich, Germany}{raskin@in.tum.de}{https://orcid.org/0000-0002-6660-5673}{}
\author{Chana Weil-Kennedy}{Technical University of Munich, Munich, Germany}{chana.weilkennedy@in.tum.de}{https://orcid.org/0000-0002-1351-8824}{}
\author{Javier Esparza}{Technical University of Munich, Munich, Germany}{esparza@in.tum.de}{https://orcid.org/0000-0001-9862-4919}{}
\authorrunning{M. Raskin, C. Weil-Kennedy, J. Esparza}
\keywords{{Petri Nets, Reachability Analysis, Parameterized Verification,  Flattability} } 
\begin{document}

\maketitle 

\begin{abstract}
In a previous paper we introduced immediate observation (IO) Petri nets, a class of interest in the study of population protocols and enzymatic chemical networks. In the first part of this paper we show that IO nets are globally flat, and so their safety properties can be checked by efficient symbolic model checking tools using acceleration techniques,
like FAST. 
In the second part  we study Branching IO nets (BIO nets), whose transitions can create tokens. BIO nets extend both IO nets and communication-free nets, also called BPP nets, a widely studied class. We show that, while BIO nets are no longer globally flat, and their sets of reachable markings may be non-semilinear, they are still locally flat. As a consequence, the coverability and reachability problem for BIO nets, and even a certain set-parameterized version of them, are in PSPACE. This makes BIO nets the first natural net class with non-semilinear reachability relation for which the reachability problem is provably simpler than for general Petri nets.
\end{abstract}
%
\section{Introduction}
\label{sec:introduction}
Immediate observation Petri nets (IO nets) model immediate observation population protocols,
as introduced by Angluin \textit{et al.} in their seminal paper on the expressive power of population protocols \cite{journals/dc/AngluinAER07}.
In an IO net each transition is defined by three places: the source place $p_s$,
the destination place $p_d$, and the observed place $p_o$. The transition can move one token from $p_s$ to $p_d$, provided that $p_o$ is not empty (if $p_s=p_o$, then $p_o$ should contain at least two tokens). In the population protocol interpretation, $p_s$, $p_d$, and $p_o$ are three possible states of each of the identical agents executing the protocol, and a transition models an agent in the state $p_s$ observing another agent in the state $p_o$ and switching to the state $p_d$.

In a previous paper~\cite{conf/apn/EsparzaRW19}
 we investigated ``many-to-many'' versions of the reachability and coverability problems for IO nets,
 in which we have a set of initial markings and a set of final markings instead of the standard ``one-to-one'' versions with a single initial marking and a single final marking.
The sets we consider are \emph{cubes}, i.e., sets of markings obtained by attaching to each place a lower bound and an upper bound (possibly infinite) 
for the number of tokens. 
We showed that while the standard one-to-one problems are \PSPACE-hard, they remain in \PSPACE \ in the many-to-many case. 
This is in strong contrast with general conservative Petri nets (nets in which transitions neither create nor destroy tokens), for which 
the many-to-many versions of the problems become  \EXPSPACE-hard or even non-elementary.

In this paper we continue our study of IO nets, and initiate the study of Branching IO nets (BIO nets for short),
in which transitions can create or destroy agents. BIO nets deserve study for at least three reasons:
\begin{itemize}
\item They are a natural generalization of both IO nets and communication-free nets (aka BPP nets), another very well studied subclass (see e.g. \cite{conf/lics/ChristensenHM93,journals/fuin/Esparza97,DBLP:journals/tcs/Yen97,DBLP:conf/wflp/Fribourg00,DBLP:conf/atva/LerouxS05,DBLP:journals/ipl/Lasota09,DBLP:journals/fuin/MayrW15}). 
\item The reachability sets of BIO nets are not necessarily semilinear. In particular, Hopcroft and Pansiot's well-known example of a Petri net with a non-semilinear reachability set (see \cite{journals/tcs/HopcroftP79}) is a BIO net. The classes of unbounded Petri nets for which the reachability problem is demonstrably simpler than for arbitrary Petri nets, like BPP-nets, reversible nets, and IO nets, have semilinear reachability sets. This makes BIO nets ideal to investigate the existence of efficient verification techniques that do not depend on semilinearity.
\item BIO nets are a natural model for enzymatic catalytic reactions of the form $A + C \rightarrow C + B_1 + \cdots + B_n$ with more than one product. For example, catalase degrades hydrogen peroxide into water and oxygen, a reaction of the form  $A + C \rightarrow C + B_1 + B_2$ \cite{Cheli04}. Since IO nets have been used to model and analyze enzymatic reactions $A + C \rightarrow C + B$ (see \cite{angeli2007petri,baldan2010petri,marwan2011petri}), we expect our results to find a similar application.
\end{itemize}

In this paper we prove that IO nets are globally flat, in the sense of Leroux and Sutre \cite{DBLP:conf/atva/LerouxS05}.  In particular, this shows that their reachability relation is semilinear.  
Since the reachability relation of BIO nets is not semilinear, this result cannot extend to BIO nets. However,  we prove that they are locally $\pre^*$-flat, also in the sense of
\cite{DBLP:conf/atva/LerouxS05} \footnote{Actually, the locally flat of \cite{DBLP:conf/atva/LerouxS05}
are what we call locally $\post^*$-flat. A net is locally $\pre^*$-flat if{}f its reverse net is locally $\post^*$-flat,
and so with respect to reachability questions the difference is immaterial.}. Both global and local flatness allow us to analyze nets 
applying existing symbolic model checking tools like FAST \cite{conf/cav/BardinFLP03}, LASH \cite{boigelot2014toolset} and TREX \cite{conf/cav/AnnichiniBS01}. 
Further, we prove that 
the many-to-many versions of the reachability and coverability problems for BIO nets are still \PSPACE-complete, as for IO nets. To the best of our knowledge, this makes BIO nets the first natural class of nets whose 
reachability relation is non-semilinear for which these problems have elementary complexity.

Our flatness and complexity results are consequences of two theorems, called the Shortening Theorems for IO and BIO nets. They state that if $\targetMarking$ is reachable 
from $\sourceMarking$, then $\targetMarking$ can be reached by a sequence of bounded \emph{accelerated} length, defined as the length of the sequence after exhaustively 
replacing any subsequence of the form $tt$ by $t$. In the case of IO nets the accelerated length is independent of the initial and final markings, while for BIO nets it only depends on the final marking. 
We consider that the Shortening Theorems are also interesting in their own right. 

The paper is organized as follows. Section \ref{sec:preliminaries} contains preliminaries, and 
Section \ref{sec:io-bio} defines IO and BIO nets. Section \ref{sec:shortening} states the Shortening Theorems,
and derives our flatness and  complexity results for the one-to-one reachability and coverability problems 
as corollaries. The proof of the Shortening Theorem for IO nets is 
given in Section \ref{sec:proof-shortening-IO} and our main result, the Shortening Theorem for BIO nets, is proved 
in Section \ref{sec:proof-shortening-BIO}. Finally, we prove in Section \ref{sec:param-reach-BIO} that the
many-to-many reachability and coverability problems remain in \PSPACE.

\section{Preliminaries}
\label{sec:preliminaries}

\medskip \noindent \textbf{Multisets.} 
A \emph{multiset} on a finite set \(E\) is a mapping \(C \colon E \rightarrow \N\), i.e. for any $e\in E$, \(C(e)\) denotes the number of occurrences of element \(e\) in \(C\).
Let $\multiset{e_1,\ldots,e_n}$ denote the multiset $C$ such that $C(e)=|\{j\mid e_j=e\}|$.
Operations on \(\N\) like addition or comparison are extended to multisets by defining them component wise on each element of \(E\).
Subtraction is allowed in the following way: if $C,D$ are multisets on set $E$ then for all $e\in E$, $(C-D)(e)=\max (C(e)-D(e),0)$.
We call $|C| \defeq\sum_{e\in E} C(e)$ the \emph{size} of $C$, and $\support{C} \defeq \{ e \mid C(e)>0 \}$ the \emph{support} of $C$. 
Given a total order $e_1 \prec e_2 \prec \cdots \prec e_n$ on $E$, a multiset $C$ can be 
equivalently represented by the vector $(C(e_1), \ldots, C(e_n))\in \N^n$. 
A set $V \subseteq \N^n$ is \emph{linear} if there is a root $\vec{r} \in \N^n$ and a set $\{\vec{p}_1, \ldots, \vec{p}_n\}$ of periods such that
$V = \{ v + \sum_{i=1}^n \lambda_i \vec{p}_i \mid \lambda_1, \ldots, \lambda_n \in \N \}$, and \emph{semilinear} if it is the union of a finite set of linear sets.
A relation on $\N^n$ is semilinear if it is semilinear as a set of $\N^{2n}$. All these notions extend to sets of multisets.

\medskip \noindent \textbf{Place/transition Petri nets with weighted arcs.}
A \emph{Petri net} $N$ is a triple $(P,T,F)$ consisting of a finite set of \emph{places} $P$, a finite set of \emph{transitions} $T$ and a \emph{flow function} $F \colon (P \times T) \cup (T \times P) \rightarrow \mathbb{N}$. 
A \emph{marking} $M$ is a multiset on $P$, and we say that a marking $M$ puts $M(p)$ \emph{tokens} in place $p$ of $P$. The \emph{size} of $M$, denoted by $|M|$, is the total number of tokens in $M$.
The \emph{preset} $\preset{t}$ and \emph{postset} $\postset{t}$ of a transition $t$ are the multisets on $P$ given by $\preset{t}(p)=F(p,t)$ and $\postset{t}(p)=F(t,p)$. A transition $t$ is \emph{enabled} at a marking $M$ if $\preset{t} \leq M$, i.e. $\preset{t}$ is component-wise smaller or equal to $M$.
If $t$ is enabled then it can be \emph{fired}, leading to a new marking $M'=M - \preset{t} + \postset{t}$. 
We let $M \xrightarrow{t} M'$ denote this.

\medskip \noindent \textbf{Reachability and coverability.}
Given $\sigma=t_1 \ldots t_n$ we write $M \xrightarrow{\sigma} M_n$ when $M \xrightarrow{t_1} M_1 \xrightarrow{t_2} M_2 \ldots \xrightarrow{t_n} M_n$, and call $\sigma$ a \emph{firing sequence}. 
We write $M' \trans{*} M''$ if $M' \xrightarrow{\sigma} M''$ for some $\sigma \in T^*$, and say that $M''$ is \emph{reachable} from $M'$. 
A marking $M$ \emph{covers} another marking $M'$, written $M \geq M'$ if $M(p) \geq M'(p)$ for all places $p$.  
A marking $M$ is \emph{coverable} from $M'$ if there exists a marking $M''$ such that $M' \trans{*} M'' \geq M$.
The \emph{reachability relation} is the set of pairs of markings $(M,M')$ such that $M \trans{*} M'$, and we denote it $\trans{*}$.
The sets of predecessors and successors of a set $\mathcal{M}$ of markings of $\net$ are 
$\pre^*(\mathcal{M}) \defeq \{ M' | \exists M \in \mathcal{M} \, . \, M' \xrightarrow{*} M \}$ and 
$\post^*(\mathcal{M}) \defeq \{ M | \exists M' \in \mathcal{M} \, . \, M' \xrightarrow{*} M \}$, respectively.

\medskip \noindent \textbf{Global and local flatness.} 
A net $\net=(P,T,F)$ is \emph{globally flat} if there exist transition words $w_1,w_2, \ldots, w_k \in T^*$ such that for every two  
markings $\sourceMarking, \targetMarking$, if $\sourceMarking \trans{*} \targetMarking$, then there exist $j_1,\ldots,j_k\geq 0$ 
satisfying $\sourceMarking\trans{w_1^{j_1}\ldots w_k^{j_k}}\targetMarking$.
Observe that the words $w_1,w_2, \ldots, w_k$ are independent of both $M$ and $M'$. 
A net $\net=(P,T,F)$ $\net$ is \emph{locally $\pre^*$-flat} 
(resp. \emph{locally $\post^*$-flat})
if for every $\targetMarking$  (resp. $\sourceMarking$)
there exist transition words $w_1,w_2, \ldots, w_k \in T^*$ such that for every $\sourceMarking$
(resp. $\targetMarking$) satisfying $\sourceMarking \trans{*} \targetMarking$ 
there exist $j_1,\ldots,j_k\geq 0$ such that $\sourceMarking\trans{w_1^{j_1}\ldots w_k^{j_k}}\targetMarking$.
The locally flat Petri nets of \cite{DBLP:conf/atva/LerouxS05} correspond to our $\post^*$-flat nets.

\section{Immediate Observation and Branching Immediate Observation Nets}
\label{sec:io-bio}
We recall the definition of immediate observation nets (IO nets), as introduced in \cite{conf/apn/EsparzaRW19},
and extend it to \bioLong{} nets (\bio{} nets).

\begin{definition} 
\label{def:IOnet}
A transition $t$ of a Petri net is an \emph{immediate observation transition} (IO transition)  if there are 
places $p_s, p_d, p_o$, not necessarily distinct, such that $\preset{t}=\multiset{p_s,p_o}$ and $\postset{t}=\multiset{p_d,p_o}$.
We call $p_s, p_d, p_o$ the \emph{source}, \emph{destination}, and \emph{observed} places of $t$, respectively. 
A Petri net is an \emph{immediate observation net} (IO net) 
if all its transitions are IO transitions.

A transition $t$ of a Petri net is a \emph{branching IO transition} (\bio{} transition)  if there is $k \geq 0$ and 
places $p_s, p_{d_1}, \ldots, p_{d_k}, p_o$, not necessarily distinct, such that $\preset{t}=\multiset{p_s,p_o}$ and 
$\postset{t}=\multiset{p_{d_1}, \ldots, p_{d_k},p_o}$. 
A Petri net is a \emph{branching IO net} (\bio{} net)  if all its transitions are \bio{} transitions. 
\end{definition}

In the following examples, we allow ourselves to consider IO and BIO nets containing transitions with no observed place. 
To make the net a formally correct IO or BIO net, it suffices to add an extra marked place which acts as observed place for these transitions.


\begin{figure}[t]
\centering%
\begin{subfigure}[t]{0.49\textwidth}
\centering%
\resizebox{6cm}{!}{
\begin{tikzpicture}[->, node distance=1.75cm, auto, thick]
      \node[place, tokens=3] (p1) {};
      \node[transition] (t1) [right of=p1] {};
      \node[place] (p2) [right of=t1] {};
      \node[transition] (t3) [below=0.8 of p2] {};
      \node[transition] (t2) [right of=p2] {};
      \node[place] (p3) [right of=t2] {};
      \node[transition] (t4) [above=0.8 of t2] {};
      
      \path[->]
      (p1) edge[bend left] node {$2$} (t1)
      (t1) edge node {} (p1)
      (t1) edge node {} (p2)
      (p2) edge node[above] {$2$} (t2)
      (t2) edge[bend left] node {} (p2)     
      (t2) edge node {} (p3)
      (p1) edge node {} (t3)
      (p2) edge node {} (t4)
      (p3) edge node {} (t4)
      (t4) edge[bend left] node {$2$} (p3)
      (p3) edge node {} (t3)
      (t3) edge[bend right] node[below] {$2$} (p3)
      ;

      \node[] () [above= -1pt of p1] {$p_1$};
      \node[] () [above= -1pt of p2] {$p_2$};
      \node[] () [below= -1pt of p3] {$p_3$};
      \node[] () [above= -1pt of t1] {$t_1$};
      \node[] () [above= -1pt of t2] {$t_2$};
      \node[] () [left= -1pt of t3] {$t_3$};
      \node[] () [left= -1pt of t4] {$t_4$};
\end{tikzpicture}
}%
\caption{An IO net.}%
\label{fig:io}
\end{subfigure}\hfill%
\begin{subfigure}[t]{0.49\textwidth}
\centering%
\resizebox{6cm}{!}{
    \begin{tikzpicture}[->, node distance=1.75cm, auto, thick]
      \node[place] (p1) {};
      \node[transition] (t1) [right of=p1] {};
      \node[place] (p2) [right of=t1] {};
      \node[transition] (t2) [right of=p2] {};
      \node[place, tokens=3] (p4) [below=0.7 of t1] {};
      \node[transition] (t3) [below=0.7 of p1] {};
      \node[place, tokens=1] (p3) [left of=t3] {};
      \node[transition] (t4) [below=0.7 of p2] {};
      
      \path[->]
      (p1) edge node {} (t1)
      (t1) edge node {} (p2)
      (p4) edge node {} (t4)
      (p2) edge node {} (t2)
      (t3) edge node {} (p1)
      (p2) edge[bend left=20] node {} (t4)
      (t4) edge[bend left=20] node {} (p2)
      (p4) edge[bend left=20] node {} (t3)
      (t3) edge[bend left=20] node {} (p4)
      (p3) edge[bend left=20] node {} (t3)
      (t3) edge[bend left=20] node {} (p3)
      ;

      \node[] () [above= -1pt of p1] {$W$};
      \node[] () [above= -1pt of p2] {$R$};
      \node[] () [below= -1pt of p3] {$S$};
      \node[] () [below= -1pt of p4] {$C$};
      \node[] () [above= -1pt of t1] {$t_2$};
      \node[] () [above= -1pt of t2] {$t_4$};
      \node[] () [below= -1pt of t3] {$t_1$};
      \node[] () [right= -1pt of t4] {$t_3$};
    \end{tikzpicture}
}%
\caption{A \bio{} net.}%
\label{fig:bimo-example}
\end{subfigure}%
\caption{Examples of IO and \bio{} nets.}%
\end{figure}
\begin{example}
Figure \ref{fig:io} shows  an IO net taken from the literature on population protocols \cite{journals/dc/AngluinAER07}. 
Intuitively, it models a protocol allowing a crowd of undistinguishable agents that can only interact in pairs 
to decide whether they are at least 3. Initially all agents are in state $p_1$, modelled by tokens in place $p_1$. 
If two agents in state $p_1$ interact, one of them moves to state $p_2$ (transition $t_1$). 
If two agents in state $p_2$ interact, one 
of them moves to $p_3$ (transition $t_2$). 
Finally, an agent in state $p_3$ can ``attract'' all other agents to state $p_3$
(transitions $t_3$ and $t_4$). 
Given a marking $M_0$ with tokens only in $p_1$, if $M_0(p_1) \geq 3$ and the pairs 
of tokens that interact next are chosen uniformly at random, then eventually all tokens reach $p_3$.

Figure \ref{fig:bimo-example} shows a BIO net representing a client server interaction.
If the server $S$ observes a client $C$, it creates a worker $W$, which
creates a response $R$ and terminates. The client $C$ ``leaves'' after observing a response.
Responses may expire.
\end{example}


IO nets are \emph{conservative}, \ie there is no creation or destruction of tokens, while \bio{} nets are not.
The next example, taken from \cite{journals/tcs/HopcroftP79}, shows that BIO nets may have
non-semilinear sets of reachable markings.
\begin{example}[\cite{journals/tcs/HopcroftP79}]
\label{ex:non-semilinear-bio}
Consider the \bio{} net $\net$ of Figure \ref{non-semilinear-bimo}, with states $p,q,c_1,c_2,c_3$ and initial marking $M_0=(1,0,0,0,1)$.
The set of markings reachable from $M_0$ in $\net$ is characterized by the condition
$ (\vec{p}=1 \land \vec{q}=0 \land 0<\vec{c_2}+\vec{c_3}\leq 2^{\vec{c_1}})
\lor 
(\vec{p}=0 \land \vec{q}=1 \land 0<2\vec{c_2}+\vec{c_3}\leq 2^{\vec{c_1+1}})
$,
where $\vec{c}$ denotes the number of tokens in some place $c$.
Informally, one token cycles between $p$ and $q$, putting a new token in $c_1$ at every new cycle. 
When $p$ is marked, tokens in $c_3$ can move to $c_2$, and when $q$ is marked, tokens in $c_2$ can move to $c_3$ while doubling their number (see Lemma 2.8 of \cite{journals/tcs/HopcroftP79}).
Clearly the reachability relation of this \bio{} net is not semilinear.
\end{example}

\begin{figure}[ht]
\centering
\vskip-0.5cm
\resizebox{7cm}{!}{
    \begin{tikzpicture}[->, node distance=1.75cm, auto, thick]

     \node[place] (p1) {};
     \node[transition] (t1) [left of=p1] {};
     \node[transition] (t2) [left=1.6 of t1] {};
     \node[place, tokens=1] (p2) [above left=0.8 of t1] {};
     \node[place] (p4) [below left=0.8 of t1] {};
     \node[transition] (t3) [left=2 of p2] {};
     \node[transition] (t4) [left=2 of p4] {};
     \node[place] (p3) [left=2 of t3] {};
     \node[place, tokens=1] (p5) [left=2 of t4] {};
      
      \path[->]
      (t1) edge node {} (p1)
      (t1) edge node {} (p2)
      (p4) edge node {} (t1)
      (p2) edge node {} (t2)
      (t2) edge node {} (p4)
      (p5) edge node {} (t3)
      (t3) edge node {} (p3)
      (p2) edge[bend left=20] node {} (t3)
      (t3) edge[bend left=20] node {} (p2)
      (p3) edge node {} (t4)
      (t4) edge node {$2$} (p5)
      (p4) edge[bend left=20] node {} (t4)
      (t4) edge[bend left=20] node {} (p4)
      ;

      \node[] () [above= -1pt of p1] {$c_1$};
      \node[] () [right= -1pt of p2] {$p$};
      \node[] () [left= -1pt of p3] {$c_2$};
      \node[] () [right= -1pt of p4] {$q$};
      \node[] () [left= -1pt of p5] {$c_3$};
      \node[] () [left= -1pt of t1] {$t_4$};
      \node[] () [left= -1pt of t2] {$t_2$};
      \node[] () [below= -1pt of t3] {$t_1$};
      \node[] () [above= -1pt of t4] {$t_3$};
    \end{tikzpicture}}
\caption{A non-flat \bio{} net.}
\label{non-semilinear-bimo}
\vskip-0.5cm
\end{figure}

\section{Shortening Theorems}
\label{sec:shortening}
We introduce the main results of our paper, called the Shortening Theorems. We use them to prove flatness results for 
IO and BIO nets, and to extend complexity results of \cite{conf/apn/EsparzaRW19} for the reachability and coverability 
problems of IO nets to the (much harder) case of BIO nets. The Shortening Theorems  themselves are proved in Sections \ref{sec:proof-shortening-IO} and 
\ref{sec:proof-shortening-BIO}, respectively. 

First, we introduce a measure of the length of firing sequences that abstracts from
the number of times a transition is consecutively executed.

\begin{definition}
Let $\net$ be a Petri net, and let $\firingSequence$ be a firing sequence.
Let $k_1, \ldots, k_m$ be the unique positive natural numbers such that $\sigma=t_1^{k_1} t_2^{k_2} \ldots t_m^{k_m}$ and $t_i \neq t_{i+1}$ for every $i=1, \ldots, m-1$. We say that $\firingSequence$ has \emph{accelerated length} $m$, and let $\agLength{\firingSequence}$ denote the accelerated length of $\firingSequence$.
\end{definition}

The Shortening Theorems for IO and BIO show that a firing sequence leading from $M'$ to $M$ can be shortened to a sequence of bounded accelerated length. 
For IO nets the bound only depends on the net, not on the markings $M$ or $M'$:

\begin{restatable}[IO Shortening]{theorem}{ThmShortIO}
\label{thm:short-io}
Let $\net$ be an IO net with $\placeCount$ places, and let $\sourceMarking, \targetMarking$ be two markings of $\net$.
If $\sourceMarking \trans{*} \targetMarking$, then $\sourceMarking \trans{\firingSequence} \targetMarking$ for some $\sigma$
of accelerated length $\agLength{\firingSequence} \leq (\placeCount^3 +1)^{\placeCount}$.
\end{restatable}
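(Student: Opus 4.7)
The plan is to analyze any firing sequence $\sigma = t_1^{k_1} t_2^{k_2} \cdots t_m^{k_m}$ (with $t_i \neq t_{i+1}$) by tracking the sequence of supports $S_0, S_1, \ldots, S_m$, where $S_i \defeq \supp{M_i}$ and $M' = M_0 \trans{t_1^{k_1}} M_1 \cdots \trans{t_m^{k_m}} M_m = M$. The crucial point is that each block $t_i^{k_i}$ changes the support in a very restricted way: if $t_i$ has source $p_s$, destination $p_d$ and observer $p_o$, then $S_i$ differs from $S_{i-1}$ by at most adding $p_d$ (if it was empty) and removing $p_s$ (if the block empties it). This already tells us that the support evolution has a highly constrained local structure, and since the bound must be independent of $M, M'$, the argument has to proceed purely on the level of supports and transition choices, never on token counts.

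The first step is to treat \emph{stable} segments, i.e., maximal infixes $t_j^{k_j} \cdots t_{j+\ell}^{k_{j+\ell}}$ during which the support is constant, $S_{j-1} = S_j = \cdots = S_{j+\ell}$. In such a segment every transition fired has its source in the support throughout, and by conservativity the segment only shuffles tokens among places of a fixed set. I would argue that such a segment can be replaced by one of accelerated length bounded by a polynomial in $n$ (roughly $n^3$, corresponding to the at most $n^3$ IO transitions determined by a triple of places), using that once a transition can be fired in a stable segment it can be saturated to any needed count, so the order of distinct transitions matters more than their repetition.

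The second step is an induction on the size $k$ of the support: let $f(k)$ bound the accelerated length of a shortest witness sequence all of whose supports have size at most $k$. A growth event $S_i \to S_{i+1} = S_i \cup \{p\}$ costs one block, and a shrink event $S_i \to S_{i+1} = S_i \setminus \{q\}$ is preceded by a segment draining $q$; between consecutive support changes we have a stable segment, whose length is controlled by Step~1. An exchange/shortening argument is then used to prune the support sequence itself: whenever the same support $S$ is visited twice, the detour in between can be straightened out (by rerouting tokens so they never leave $S$), so in a shortest sequence each support is visited only boundedly often at a given ``level''. This yields a recurrence $f(k) \le (n^3 + 1) \cdot f(k-1)$ with $f(0) = 1$, giving $f(n) \le (n^3+1)^n$, which is exactly the stated bound.

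The main obstacle is the exchange argument that justifies pruning the support sequence. Because IO transitions require the observer token to be simultaneously present, rerouting a subsequence so that the support never drops below $S$ is not merely a matter of reordering; one must show that any detour through a smaller support can be mimicked by an alternative sequence keeping the extra witnesses in place. This is where the IO structure (one token moved, one observed, source/destination symmetric up to the $p_s = p_o$ caveat) is essential, and where the reasoning is inherently finer than for BPP or general conservative nets. Once this lemma is established, bounding stable segments and assembling the recursion are comparatively routine.
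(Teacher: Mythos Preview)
Your approach is genuinely different from the paper's, but it has a real gap at precisely the point you yourself flag as the ``main obstacle'', and the surrounding scaffolding (the stable-segment bound and the recurrence) is not actually supported either.

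The paper does not track supports at all. It works with \emph{histories}: multisets of trajectories, one per token. The key tool is the Pruning Lemma of~\cite{conf/apn/EsparzaRW19}: in any realizable history, a bunch of more than $n$ trajectories sharing the same initial and final place can be replaced by one of size at most $n$, preserving realizability. Applying this to every pair $(p,q)$ yields a realizable history $\widetilde{H}$ with at most $n\cdot n^2=n^3$ trajectories in total. Every intermediate marking of $\widetilde{H}$ then has at most $n^3$ tokens, hence lies in a set of size at most $(n^3+1)^n$; if $\widetilde{H}$ is longer than that, two intermediate markings coincide and the loop can be excised. Finally a Boosting Lemma (duplicating a trajectory preserves realizability) restores the original endpoints $M',M$ without increasing the length. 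So the bound $(n^3+1)^n$ arises from a plain pigeonhole on \emph{markings} after the total token count has been capped at $n^3$; there is no recursion on support size.

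Your argument never bounds token counts, and that is where it breaks. In your pruning step, visiting the same support $S$ at steps $i$ and $j$ does \emph{not} give $M_i=M_j$: the token counts within $S$ can differ arbitrarily, so you cannot excise the detour. You would instead have to replace it by a stable segment from $M_i$ to $M_j$, which throws you back on Step~1. But Step~1 is itself unjustified: there are at most $n^3$ IO transitions, yet a stable segment must hit a \emph{specific} target marking $M_j$ without ever emptying a place of $S$, and ``saturating'' a transition may well empty its source and leave $S$. Nor do these pieces assemble into the recurrence $f(k)\le (n^3+1)f(k-1)$: at level $k$ there are up to $\binom{n}{k}$ different supports of that size, and the support can oscillate between sizes $k$ and $k-1$ many times, hitting a new size-$k$ support each time; nothing you have said bounds the number of such oscillations. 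The missing idea is that one should prune \emph{tokens} (via the Pruning Lemma), not supports; once the total token count is at most $n^3$, ordinary loop-cutting on markings finishes the job without any of these difficulties.
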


Example \ref{ex:non-semilinear-bio} shows that for BIO nets the bound cannot be independent of both $M$ and $M'$:

\begin{example}
Recall the \bio{} net of Example \ref{ex:non-semilinear-bio} with states $p,q,c_1,c_2,c_3$.
It is easy to see that for $j\ge 1$ the marking $M_j \defeq (1,0, j, 0, 2^j)$
is reachable only via the firing sequence
$$(t_1t_2t_3t_4) (t_1^2t_2t_3^2t_4) \ldots (t_1^it_2t_3^it_4) \ldots (t_1^{j}t_2t_3^{j}t_4).$$
This sequence has accelerated length $4j$, which depends on the target marking $M_j$.
\end{example}

However, we can still obtain a bound independent of $M'$:

\begin{restatable}[\bio{} Shortening]{theorem}{BimoIntermediateValues}
\label{thm:bimo-intermediate-values}
Let $\net$ be a BIO net with $\placeCount$ places, let $\sourceMarking, \targetMarking$ be two markings of $\net$, and let
$|\sourceMarking|=\sourceSize$, $|\targetMarking|=\targetSize$.
Let $m_d := \max_{t \in T} |\postset{t} - \preset{t}|$ denote the maximum number of tokens created by a transition of $\net$.
If $\sourceMarking \trans{*} \targetMarking$, then $\sourceMarking \trans{\firingSequence} \targetMarking$ for some $\sigma$
of accelerated length
$\agLength{\firingSequence} \leq 2^n(m+1)^n(n+1)^n $. Further, the 
intermediate markings along $\firingSequence$ have size at most
$
(m' +  2^n(m+1)^n(n+1)^n  (m+n) m_d) m_d^n
$.
\end{restatable}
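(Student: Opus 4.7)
\medskip

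The plan is to extend the strategy behind the IO Shortening Theorem (Theorem \ref{thm:short-io}) to accommodate transitions that create tokens, while controlling both the accelerated length and the size of intermediate markings. I would proceed by induction on $\placeCount$, the number of places. The base case $\placeCount=0$ (and small $\placeCount$) is immediate. For the inductive step, given $\sourceMarking \trans{\firingSequence} \targetMarking$, my goal is to decompose $\sigma$ into a bounded number of segments, each of which can be shortened by applying the inductive hypothesis or the IO result to a smaller sub-net.

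The first step is to partition $\firingSequence$ into maximal blocks on which the \emph{support} of the running marking remains constant. Since there are only $2^\placeCount$ possible supports, the sequence of distinct consecutive supports along $\firingSequence$ has length bounded by $2^\placeCount$; this accounts for the leading $2^\placeCount$ factor in the claimed bound on $\agLength{\firingSequence}$. Within a single block, no place becomes newly marked and none is emptied, so every fired transition $t$ has its source place $p_s$, observed place $p_o$, and all destination places $p_{d_1},\ldots,p_{d_k}$ already in the current support $S$. Restricting attention to $S$, the block can be analysed as a firing sequence in a net on $|S| < \placeCount$ places together with counters tracking the ``excess'' tokens produced in destinations, which reduces to a problem amenable to the inductive hypothesis (or to Theorem \ref{thm:short-io} on the conservative skeleton obtained by ignoring branching effects). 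The inductive bound on each block produces the $(\placeCount+1)^\placeCount$ factor, while the $(\targetSize+1)^\placeCount$ factor tracks the at-most-$\targetSize$ tokens per place that must eventually arrive in $\targetMarking$.

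The main obstacle is that BIO transitions can create tokens, so the natural ``cut-and-paste'' shortening used for conservative nets may inflate intermediate markings arbitrarily. To handle this, I would work with a canonical minimal firing sequence, obtained by iteratively removing any pair of a branching transition and a later consumption of one of its created tokens whose cascade of effects is not reflected in $\targetMarking$. Formally, I would trace the \emph{lineage forest} of the $\targetSize$ tokens in $\targetMarking$ plus the at most $\placeCount$ extra tokens needed as observers at any step; each token either originates from $\sourceMarking$ or is the product of a branching transition acting on an earlier token. Since each transition produces at most $m_d$ new tokens and there are at most $\placeCount$ nested branching levels before a support change must occur, the lineage forest has depth $\placeCount$ and branching factor $m_d$, yielding the $m_d^\placeCount$ factor in the intermediate marking bound; combining this with the number of transitions fired along the shortened sequence gives the overall $(\sourceSize + 2^\placeCount (\targetSize+1)^\placeCount(\placeCount+1)^\placeCount(\targetSize+\placeCount)m_d)\,m_d^\placeCount$ bound.

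The hardest part, I expect, is making the block-decomposition argument rigorous in the presence of branching: one must show that a long block-internal sequence either admits a direct acceleration (a repeated transition block $t^j$) or contains a redundant create/consume pattern that can be excised without changing the effect on the block's final marking. The delicate point is ensuring that the excision remains compatible with the global sequence, \ie that later transitions still have their required source and observer tokens available. This is precisely where the ``lineage forest'' viewpoint pays off: by only cutting out create/consume pairs whose entire descendant subtree is internal to the block, we preserve all tokens that the rest of $\firingSequence$ depends on. Once this invariant is established, the three quantitative factors $2^\placeCount$, $(\targetSize+1)^\placeCount$, $(\placeCount+1)^\placeCount$ drop out of the combinatorics of supports, per-place counts, and sub-net structure, respectively, giving the stated bounds.
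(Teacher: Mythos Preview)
Your central decomposition step does not work. You claim that partitioning $\sigma$ into maximal constant-support blocks yields at most $2^n$ blocks because there are only $2^n$ possible supports; but supports can oscillate in BIO nets (a transition with empty destination multiset empties a place, a branching transition refills it), so the same support may recur arbitrarily often along a single firing sequence, and the number of blocks is not bounded by any function of $n$ alone. The inductive step also fails as stated: nothing forces $|S| < n$ inside a block (the support may well be all of $P$), and within a block branching transitions still create and destroy tokens, so neither the inductive hypothesis on a strictly smaller net nor Theorem~\ref{thm:short-io} for conservative nets is applicable.

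The paper's argument is structurally different. It does use your ``lineage forest'' (there called branching trajectories and histories), but the key idea you are missing is a three-way \emph{decoration} of the nodes into cargo (ancestors of the final tokens), fuel (observer nodes, reduced to at most $n$ per place per level by a Replacement Lemma), and smoke (dispensable nodes, of which only the support matters by a Smoke Irrelevance Lemma). Two levels with the same \emph{footprint} $(\text{cargo multiset},\ \text{fuel multiset},\ \text{smoke support})$ can then be collapsed, and counting footprints gives exactly $(m+1)^n(n+1)^n 2^n$. The three factors in the bound are thus the counts of cargo, fuel, and smoke-support configurations respectively, not support blocks and sub-net sizes as your outline suggests. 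The intermediate-marking bound then follows by replacing all smoke subtrees with canonical ones that branch only at $n$ designated levels; this is where the $m_d^n$ factor genuinely arises, rather than from a depth-$n$ lineage argument.
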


\subsection{Flatness and complexity results}
\label{sec:flatness}
The Shortening Theorems lead easily to our flatness and complexity results:

\begin{theorem}
\label{thm:io-flat}
IO nets are globally flat. BIO nets are locally $\prestar$-flat, but neither globally flat nor locally $\poststar$-flat.
\end{theorem}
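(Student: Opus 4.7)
My plan is to derive all three assertions directly from the two Shortening Theorems (Theorems~\ref{thm:short-io} and~\ref{thm:bimo-intermediate-values}) together with Example~\ref{ex:non-semilinear-bio}. The positive flatness statements reduce to a uniform ``enumeration and padding'' construction, and the negative statements follow from the fact that any globally flat or locally $\post^*$-flat net has semilinear forward reachability sets, whereas the Hopcroft--Pansiot BIO net does not.

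For global flatness of IO nets, I would set $K := (\placeCount^3+1)^\placeCount$, the uniform accelerated-length bound from Theorem~\ref{thm:short-io}. Any witnessing sequence $\firingSequence = t_1^{k_1}\cdots t_r^{k_r}$ has $r \leq K$, and can be padded to length exactly $K$ by appending $K-r$ arbitrary transitions with exponent $0$. I would then enumerate all $|T|^K$ templates $(s_1,\ldots,s_K) \in T^K$ as $\tau_1,\ldots,\tau_{|T|^K}$ and concatenate their letters into one fixed master sequence of single-transition words $w_1,w_2,\ldots,w_{K|T|^K}$. Given any reachable pair $\sourceMarking \trans{*} \targetMarking$, pick the template matching the padded $\firingSequence$, say $\tau_i$; setting the exponents in the $i$-th block to $(k_1,\ldots,k_r,0,\ldots,0)$ and all other exponents to $0$ produces exactly $\firingSequence$ as a subword of the form $w_1^{j_1}\cdots w_{K|T|^K}^{j_{K|T|^K}}$. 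Since both $K$ and the enumeration depend only on $\net$, this witnesses global flatness.

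For local $\prestar$-flatness of BIO nets I would use the same construction, but with $K$ replaced by $K(\targetMarking) := 2^\placeCount(|\targetMarking|+1)^\placeCount(\placeCount+1)^\placeCount$ from Theorem~\ref{thm:bimo-intermediate-values}. Crucially this bound depends only on the target $\targetMarking$, not on $\sourceMarking$, which matches the quantifier structure demanded by local $\prestar$-flatness: fix $\targetMarking$, build the master sequence from $K(\targetMarking)$, and apply the argument above to every $\sourceMarking$ with $\sourceMarking \trans{*} \targetMarking$.

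For the two negativity claims I would invoke Example~\ref{ex:non-semilinear-bio}, which exhibits a BIO net $\net$ and a marking $M_0$ with non-semilinear $\post^*(M_0)$. I then need the standard consequence that both global flatness and local $\post^*$-flatness imply that $\post^*(M')$ is semilinear for every $M'$: given fixed words $w_1,\ldots,w_k$, the set of exponent tuples $(j_1,\ldots,j_k)$ such that $w_1^{j_1}\cdots w_k^{j_k}$ is firable from $M'$ is Presburger-definable (by the flat-acceleration analysis of \cite{DBLP:conf/atva/LerouxS05}), and the corresponding set of resulting markings is then a linear image of this set. The non-semilinear $\post^*(M_0)$ of Example~\ref{ex:non-semilinear-bio} contradicts both forms of flatness. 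The main (and only) non-routine point is this semilinearity-from-flatness implication; I would simply cite \cite{DBLP:conf/atva/LerouxS05} rather than redo that analysis, since the rest of the argument is just a syntactic packing of bounded-length witnesses into a single master sequence.
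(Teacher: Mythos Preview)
Your arguments for parts (a) and (b) are correct and follow the same strategy as the paper: use the accelerated-length bound from the relevant Shortening Theorem, then package all sequences of that bounded accelerated length into a single fixed list of words. The only difference is efficiency of the packaging. You enumerate all $|T|^K$ templates, producing $K\,|T|^K$ single-transition words; the paper instead observes that every sequence of accelerated length at most $K$ already lies in $(t_1^*t_2^*\cdots t_m^*)^K$, so the cyclic list $w_i = t_{((i-1)\bmod m)+1}$ of length $m\cdot K$ suffices. Both are valid witnesses; the paper's is exponentially shorter but this has no bearing on the qualitative statement being proved.

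For part (c) you take a genuinely different route. The paper argues directly from the combinatorics of the unique firing sequence reaching $M_j=(1,0,j,0,2^j)$: that sequence contains $j$ blocks of the shape $t_1^i t_2 t_3^i t_4$ with strictly increasing $i$, and a concatenation $w_1^{j_1}\cdots w_k^{j_k}$ with fixed $w_1,\ldots,w_k$ cannot realize unboundedly many such blocks (the gaps between successive $t_4$'s inside any single factor $w_l^{j_l}$ are bounded by $|w_l|$, and there are only $k-1$ block boundaries). Your argument is instead model-theoretic: local $\post^*$-flatness would force $\post^*(M_0)$ to be semilinear via the standard flat-acceleration analysis of \cite{DBLP:conf/atva/LerouxS05}, contradicting Example~\ref{ex:non-semilinear-bio}. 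Both arguments are sound. Yours is shorter and more conceptual but depends on an external result; the paper's is self-contained and extracts an explicit obstruction from the structure of the Hopcroft--Pansiot net.
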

\begin{proof}
\noindent (a) We show that IO nets are globally flat. 
Let $\net=(P,T,F)$ be an IO net with $n$ places and $T= \{t_1, \ldots, t_m\}$,
and let $K=(\placeCount^3 +1)^{\placeCount}$.
By Theorem \ref{thm:short-io}, for every two markings $\sourceMarking$ and $\targetMarking$ of $\net$ there is a 
firing sequence $t_{i_1}^{j_1} \cdots t_{i_K}^{j_K}$ leading from $\sourceMarking$ to $\targetMarking$. Since every such sequence belongs to the regular language
$ \left( t_1^*t_2^* \cdots t_m^*\right)^{K}$, the words $w_1,w_2, \ldots, w_{m \cdot K}$  given by $w_i = t_{((i-1) \, \text{mod} \, m)+1}$ for every $1 \leq i \leq m \cdot K$ witness that $\net$ is globally flat.

\smallskip

\noindent (b) We show that BIO nets are locally $\prestar$-flat. Let $\net=(P,T,F)$ be a BIO net with $n$ places and $T= \{t_1, \ldots, t_m\}$, let $\targetMarking$ be a marking of $\net$ with $|\targetMarking|=\targetSize$, and let $K = 2^n(m+1)^n(n+1)^n$.
By Theorem \ref{thm:bimo-intermediate-values}, for every marking $\sourceMarking$ of $\net$ there is a 
firing sequence $t_{i_1}^{j_1} \cdots t_{i_K}^{j_K}$ leading from $\sourceMarking$ to $\targetMarking$. Proceed now as for (a).

\smallskip

\noindent (c) We show that BIO nets are not locally $\poststar$-flat, and so also not globally flat. 
Consider the \bio{} net of Figure \ref{non-semilinear-bimo} with states $p,q,c_1,c_2,c_3$. Recall
that for all $j \ge 1$, $M_0$ only reaches the marking $M_j \defeq (1,0, j, 0, 2^j)$ via 
$(t_1t_2t_3t_4) (t_1^2t_2t_3^2t_4) \ldots (t_1^it_2t_3^it_4) \ldots (t_1^{j}t_2t_3^{j}t_4)$.
So in order to reach $M_j$ it is necessary to fire $j$ times a sequence of the form $t_1^{k}t_2^{k_2}t_3^{k}t_4^{k_4}$, which proves the result.
\end{proof}

\begin{theorem}
\label{thm:simple-BIO-reachability}
The reachability and coverability problems for \bio{} nets are \PSPACE-complete.
\end{theorem}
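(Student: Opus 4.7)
The plan is to obtain \PSPACE-hardness by inheritance from IO nets and \PSPACE membership through a nondeterministic guess-and-verify algorithm that exploits the BIO Shortening Theorem (Theorem \ref{thm:bimo-intermediate-values}). Since every IO net is a BIO net, the \PSPACE-hardness of reachability and coverability established for IO nets in \cite{conf/apn/EsparzaRW19} transfers directly, yielding the lower bounds.

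For the reachability upper bound I describe an NPSPACE algorithm; by Savitch's theorem this suffices. Let $N$ have $n$ places, let $M', M$ be source and target markings of sizes $m'$ and $m$ given in binary, and set $m_d := \max_t |\postset{t}-\preset{t}|$. By Theorem \ref{thm:bimo-intermediate-values}, whenever $M' \trans{*} M$ there is a witnessing sequence of accelerated length at most $K := 2^n(m+1)^n(n+1)^n$ along which every intermediate marking has size at most $B := (m' + K(m+n)m_d)\,m_d^n$. Both $K$ and $B$ are singly exponential in the input size, so their logarithms are polynomial. The algorithm maintains a current marking $M_{\mathrm{cur}}$ (stored as $n$ integers of $O(\log B)$ bits) and an iteration counter bounded by $K$; at each of at most $K$ steps it nondeterministically guesses a transition $t$ and a binary multiplicity $j \leq B$, verifies $j \cdot \preset{t} \leq M_{\mathrm{cur}}$, and updates $M_{\mathrm{cur}}$ to $M_{\mathrm{cur}} + j \cdot (\postset{t} - \preset{t})$. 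It accepts iff $M_{\mathrm{cur}} = M$ after at most $K$ rounds. All arithmetic is on polynomial-bit numbers, so the working space is polynomial.

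For coverability I reduce to reachability while staying inside the BIO class. I add a fresh auxiliary place $p_*$ marked with one token initially, and for each pair $(p, q)$ with $p \in P$ and $q \in P \cup \{p_*\}$ I add a BIO leak transition $t_{p,q}^-$ with $\preset{t_{p,q}^-} = \multiset{p, q}$ and $\postset{t_{p,q}^-} = \multiset{q}$, matching the BIO template of Definition \ref{def:IOnet} with source $p$, observer $q$, and output degree $k = 0$. Because $p_*$ is observed but never consumed it remains marked throughout, so leaks can always fire when needed. A straightforward argument then shows that $M$ is coverable from $M'$ in $N$ iff $M + \multiset{p_*}$ is reachable from $M' + \multiset{p_*}$ in the augmented BIO net: leaks only remove tokens, so erasing them from any witness sequence in the augmented net yields a sequence in $N$ reaching some $M'' \geq M$; conversely, any $N$-sequence reaching $M'' \geq M$ can be extended with leaks to eliminate the excess tokens using $p_*$ as observer whenever no other place is marked. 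Invoking the reachability algorithm on the polynomial-size augmentation closes the proof.

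The main obstacle is the careful correctness analysis of this coverability reduction, in particular the degenerate cases (the target marking being zero, or all excess sitting on a single place with no natural observer) which the auxiliary place $p_*$ is engineered to resolve. Once this bookkeeping is verified, \PSPACE membership for both problems follows uniformly from the Shortening Theorem combined with Savitch's theorem.
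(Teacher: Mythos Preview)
Your approach matches the paper's: hardness by inheritance from IO nets, membership via an \NPSPACE\ guess-and-check using the bounds of Theorem~\ref{thm:bimo-intermediate-values} together with Savitch's theorem, and coverability via a reduction that adds token-destroying transitions. Your coverability reduction is in fact more explicit than the paper's, which simply adds transitions $\tau_p$ with $\preset{\tau_p}=\set{p}$ and $\postset{\tau_p}=\emptyset$ and tacitly relies on the convention (stated just after Definition~\ref{def:IOnet}) that a missing observer can always be supplied by a dummy permanently-marked place; your $p_*$ construction makes this formal within the BIO class.

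One technical slip: the enabling test ``$j \cdot \preset{t} \leq M_{\mathrm{cur}}$'' for firing $t^j$ is too strong and, as written, makes your algorithm incomplete. In a BIO transition the observer token is reused across repeated firings, so from a marking with $j$ tokens in the source $p_s$ and a single token in a distinct observer $p_o$ one can legitimately fire $t$ exactly $j$ times, yet $j \cdot \preset{t}$ would demand $j$ tokens in $p_o$. The fix is immediate: either replace the test by the correct closed-form enabling condition for $t^j$ (a small case analysis on which places lose, gain, or are fixed by $t$), or simply simulate one transition firing at a time with a step counter bounded by $K \cdot B$, which is still only singly exponential and hence storable in polynomial space.
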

\begin{proof}
Reachability and coverability are \PSPACE-complete for IO nets \cite{conf/apn/EsparzaRW19}, and IO nets are a subclass of \bio{} nets, so the problems stay \PSPACE-hard for \bio{} nets. 
By Savitch's theorem it suffices to show that the problems are in \NPSPACE.
Consider first the reachability problem. By the Shortening Theorem, given a \bio{} net with $n$ places and two markings $M$ and $M'$ we can guess a firing sequence leading from $M$ to $M'$, if one
exists, using space $\log (f(n,m,m',m_d))$, where $f(n,m,m',m_d)$ is the exponential bound of the Shortening Theorem. So the reachability
problem is in \NPSPACE.  For coverability, we reduce it to reachability in the usual way. Let $\targetMarking$ be the marking we want to cover.
For each place $p$, we add a ``destroying transition'' $\tau_p$ with preset $\preset{t}=\set{p}$ and postset $\postset{t}=\emptyset$.
It is easy to see that for every marking $\sourceMarking$, the modified net $\net'$ has a firing sequence from $\sourceMarking$ to $\targetMarking$
i{}ff $\net$ has a firing sequence from $\sourceMarking$ to some marking covering $\targetMarking$.
\end{proof}

\section{Shortening Theorem for IO nets}
\label{sec:proof-shortening-IO}
The proof of Theorem \ref{thm:short-io} is based on a result of \cite{conf/apn/EsparzaRW19} called the Pruning Lemma.
We briefly introduce some notions required to state the lemma, and then the lemma itself. More details
can be found in \cite{conf/apn/EsparzaRW19}.

\subparagraph*{Trajectories and histories.} Since the transitions of IO nets do not create or destroy tokens, we can give tokens identities. 
Given a firing sequence, each token of the initial marking follows a \emph{trajectory} through the places of the net until it 
reaches the final marking of the sequence. The trajectories of the tokens between given source
and target markings constitute a \emph{history}. 

Fix an IO net $N$. A \emph{trajectory} of an IO net $N$ is a sequence $\tau =p_1 \ldots p_k$ of places. We let  $\tau(i)$ denote the $i$-th place of $\tau$. The \emph{$i$-th step} of $\tau$ is the pair $\tau(i)\tau(i+1)$. A \emph{history} $H$ of length $h$ is a multiset of trajectories of length $h$. Given an index $1 \leq i \leq h$, \emph{the $i$-th marking of $H$}, denoted $M_{H}^i$, is defined as follows: for every place $p$, $M_{H}^i(p)$ is the number of trajectories $\tau \in H$ such that $\tau(i)=p$. The markings 
$M_{H}^1$ and $M_{H}^h$ are the \emph{initial} and \emph{final} markings of $H$, and we write
$M_{H}^1 \trans{H} M_{H}^h$. A history $H$ of length $h\geq 1$ is \emph{realizable} if there exist transitions $t_1, \ldots, t_{h-1}$ 
and numbers $k_1, \ldots, k_{h-1} \geq 0$ such that 
\begin{itemize}
\item $M_{H}^1 \trans{t_1^{k_1}}M_{H}^2 \cdots  M_{H}^{h-1} \trans{t_{h-1}^{k_{h-1}}} M_{H}^h$, where for every $t$  we define $\sourceMarking \trans{t^0} \targetMarking$ if{}f $\sourceMarking=\targetMarking$.
\item For every $1 \leq i \leq h-1$, there are exactly $k_i$ trajectories $\tau \in H$ such that $\tau(i)\tau(i+1) = p_s p_d$, where $p_s, p_d$ are the source and target places of $t_i$, and all other trajectories 
$\tau \in H$ satisfy $\tau(i)=\tau(i+1)$.
Moreover, there is at least one trajectory $\tau$ in $H$ such that $\tau(i)\tau(i+1) = p_o p_o$, where $p_o$ is the observed place of $t_i$.
\end{itemize}
We say that $t_1^{k_1} \cdots t_{h-1}^{k_{h-1}}$ realizes $H$.
Intuitively, at a step of a realizable history only one transition occurs, although perhaps multiple times, for
different tokens.  From the definition of realizable history we immediately obtain:
\begin{itemize}
\item $\sourceMarking \trans{*} \targetMarking$ if{}f there exists a realizable history with $\sourceMarking$ and $\targetMarking$ as initial and final markings. 
\item Every firing sequence that realizes a history of length $h$ has accelerated length at most $h$.
\end{itemize}

\newcommand{\distIO}{*0.7}

\begin{figure}[t]
\centering%
\begin{subfigure}[t]{0.49\textwidth}
\centering%
\resizebox{7cm}{!}{
\begin{tikzpicture}[scale = 0.4]
  \node[circle, draw]
    (Node11)
    {~~~}; 
    \node[right=1\distIO of Node11, circle, draw]
    (Node12)
    {~~~};
    \node[right=1\distIO of Node12, circle, draw]
    (Node13)
    {~~~};
    \node[right=1\distIO of Node13, circle, draw]
    (Node14)
    {~~~};
    \node[right=1\distIO of Node14, circle, draw]
    (Node15)
    {~~~};
    \node[right=1\distIO of Node15, circle, draw]
    (Node16)
    {~~~};
    \node[right=1\distIO of Node16, circle, draw]
    (Node17)
    {~~~};
    \node[below=0.7\distIO of Node11, circle, draw]
    (Node21)
    {~~~};
    \node[right=1\distIO of Node21, circle, draw]
    (Node22)
    {~~~};
    \node[right=1\distIO of Node22, circle, draw]
    (Node23)
    {~~~};
    \node[right=1\distIO of Node23, circle, draw]
    (Node24)
    {~~~};
    \node[right=1\distIO of Node24, circle, draw]
    (Node25)
    {~~~};
    \node[right=1\distIO of Node25, circle, draw]
    (Node26)
    {~~~};
    \node[right=1\distIO of Node26, circle, draw]
    (Node27)
    {~~~};
    \node[below=0.7\distIO of Node21, circle, draw]
    (Node31)
    {~~~};
    \node[right=1\distIO of Node31, circle, draw]
    (Node32)
    {~~~};
    \node[right=1\distIO of Node32, circle, draw]
    (Node33)
    {~~~}; 
    \node[right=1\distIO of Node33, circle, draw]
    (Node34)
    {~~~};
    \node[right=1\distIO of Node34, circle, draw]
    (Node35)
    {~~~};
    \node[right=1\distIO of Node35, circle, draw]
    (Node36)
    {~~~};
    \node[right=1\distIO of Node36, circle, draw]
    (Node37)
    {~~~};
        \node[left=0.1 of Node11]
        (Node11l)
        {$p_1$};
        \node[left=0.1 of Node21]
        (Node21l)
        {$p_2$};
        \node[left=0.1 of Node31]
        (Node31l)
        {$p_3$};

\def\d{3}
  \draw
  ($(Node11.center)+(0,1.50mm*\d)$)
    --
  ($(Node12.center)+(0,1.50mm*\d)$)
  ;
  \draw
  ($(Node12.center)+(0,1.50mm*\d)$)
    --
  ($(Node13.center)+(0,1.50mm*\d)$)
  ;
  \draw($(Node13.center)+(0,1.50mm*\d)$)
    --
  ($(Node14.center)+(0,1.50mm*\d)$)
  ;
  \draw
  ($(Node14.center)+(0,1.50mm*\d)$)
    --
  ($(Node15.center)+(0,1.50mm*\d)$)
  ;
  \draw
  ($(Node15.center)+(0,1.50mm*\d)$)
    --
  ($(Node16.center)+(0,1.50mm*\d)$)
  ;
  \draw
  ($(Node16.center)+(0,1.50mm*\d)$)
    --
  ($(Node17.center)+(0,1.50mm*\d)$)
  ;
    
  \draw
  ($(Node11.center)+(0,0.75mm*\d)$)
    --
  ($(Node12.center)+(0,0.75mm*\d)$)
  ;
  \draw
  ($(Node12.center)+(0,0.75mm*\d)$)
    --
  ($(Node13.center)+(0,0.75mm*\d)$)
  ;
  \draw($(Node13.center)+(0,0.75mm*\d)$)
    --
  ($(Node14.center)+(0,0.75mm*\d)$)
  ;
  \draw
  ($(Node14.center)+(0,0.75mm*\d)$)
    --
  ($(Node35.center)+(0,0.00mm*\d)$)
  ;
  \draw
  ($(Node35.center)+(0,0.00mm*\d)$)
    --
  ($(Node36.center)+(0,0.00mm*\d)$)
  ;
  \draw
  ($(Node36.center)+(0,0.00mm*\d)$)
    --
  ($(Node37.center)+(0,0.00mm*\d)$)
  ;
  \draw
  ($(Node11.center)+(0,-0.75mm*\d)$)
    --
  ($(Node12.center)+(0,-0.75mm*\d)$)
  ;
  \draw
  ($(Node12.center)+(0,-0.75mm*\d)$)
    --
  ($(Node23.center)+(0,-0.75mm*\d)$)
  ;
  \draw
  ($(Node23.center)+(0,-0.75mm*\d)$)
    --
  ($(Node24.center)+(0,-0.75mm*\d)$)
  ;
  \draw
  ($(Node24.center)+(0,-0.75mm*\d)$)
    --
  ($(Node25.center)+(0,-0.75mm*\d)$)
  ;
  \draw
  ($(Node25.center)+(0,-0.75mm*\d)$)
    --
  ($(Node36.center)+(0,0.75mm*\d)$)
  ;
  \draw
  ($(Node36.center)+(0,0.75mm*\d)$)
    --
  ($(Node37.center)+(0,0.75mm*\d)$)
  ;
  \draw
  ($(Node11.center)+(0,0.0mm*\d)$)
    --
  ($(Node12.center)+(0,0.0mm*\d)$)
  ;
  \draw
  ($(Node12.center)+(0,0.0mm*\d)$)
    --
  ($(Node13.center)+(0,0.0mm*\d)$)
  ;
  \draw
  ($(Node13.center)+(0,0.0mm*\d)$)
    --
  ($(Node24.center)+(0,0.0mm*\d)$)
  ;
  \draw
  ($(Node24.center)+(0,0.0mm*\d)$)
    --
  ($(Node25.center)+(0,0.0mm*\d)$)
  ;
  \draw
  ($(Node25.center)+(0,0.0mm*\d)$)
    --
  ($(Node26.center)+(0,0.0mm*\d)$)
  ;
  \draw
  ($(Node26.center)+(0,0.0mm*\d)$)
    --
  ($(Node37.center)+(0,1.50mm*\d)$)
  ;
  \draw
  ($(Node11.center)+(0,-1.50mm*\d)$)
    --
  ($(Node32.center)+(0,-0.75mm*\d)$)
  ;
  \draw
  ($(Node32.center)+(0,-0.75mm*\d)$)
    --
  ($(Node33.center)+(0,-0.75mm*\d)$)
  ;
  \draw
  ($(Node33.center)+(0,-0.75mm*\d)$)
    --
  ($(Node34.center)+(0,-0.75mm*\d)$)
  ;
  \draw
  ($(Node34.center)+(0,-0.75mm*\d)$)
    --
  ($(Node35.center)+(0,-0.75mm*\d)$)
  ;
  \draw
  ($(Node35.center)+(0,-0.75mm*\d)$)
    --
  ($(Node36.center)+(0,-0.75mm*\d)$)
  ;
  \draw
  ($(Node36.center)+(0,-0.75mm*\d)$)
    --
  ($(Node37.center)+(0,-0.75mm*\d)$)
  ;
  \draw
  ($(Node31.center)+(0,-1.50mm*\d)$)
    --
  ($(Node32.center)+(0,-1.50mm*\d)$)
  ;
  \draw
  ($(Node32.center)+(0,-1.50mm*\d)$)
    --
  ($(Node33.center)+(0,-1.50mm*\d)$)
  ;
  \draw
  ($(Node33.center)+(0,-1.50mm*\d)$)
    --
  ($(Node34.center)+(0,-1.50mm*\d)$)
  ;
  \draw
  ($(Node34.center)+(0,-1.50mm*\d)$)
    --
  ($(Node35.center)+(0,-1.50mm*\d)$)
  ;
  \draw
  ($(Node35.center)+(0,-1.50mm*\d)$)
    --
  ($(Node36.center)+(0,-1.50mm*\d)$)
  ;
  \draw
  ($(Node36.center)+(0,-1.50mm*\d)$)
    --
  ($(Node37.center)+(0,-1.50mm*\d)$)
  ;

\end{tikzpicture}}
\caption{A realizable history.}
\label{figure-history}
\end{subfigure}%
\hfill%
\begin{subfigure}[t]{0.49\textwidth}
\centering%
\resizebox{7cm}{!}{
\begin{tikzpicture}[scale = 0.4]
  \node[circle, draw]
    (Node11)
    {~~~}; 
    \node[right=1\distIO of Node11, circle, draw]
    (Node12)
    {~~~};
    \node[right=1\distIO of Node12, circle, draw]
    (Node13)
    {~~~};
    \node[right=1\distIO of Node13, circle, draw]
    (Node14)
    {~~~};
    \node[right=1\distIO of Node14, circle, draw]
    (Node15)
    {~~~};
    \node[right=1\distIO of Node15, circle, draw]
    (Node16)
    {~~~};
    \node[right=1\distIO of Node16, circle, draw]
    (Node17)
    {~~~};
    \node[below=0.7\distIO of Node11, circle, draw]
    (Node21)
    {~~~};
    \node[right=1\distIO of Node21, circle, draw]
    (Node22)
    {~~~};
    \node[right=1\distIO of Node22, circle, draw]
    (Node23)
    {~~~};
    \node[right=1\distIO of Node23, circle, draw]
    (Node24)
    {~~~};
    \node[right=1\distIO of Node24, circle, draw]
    (Node25)
    {~~~};
    \node[right=1\distIO of Node25, circle, draw]
    (Node26)
    {~~~};
    \node[right=1\distIO of Node26, circle, draw]
    (Node27)
    {~~~};
    \node[below=0.7\distIO of Node21, circle, draw]
    (Node31)
    {~~~};
    \node[right=1\distIO of Node31, circle, draw]
    (Node32)
    {~~~};
    \node[right=1\distIO of Node32, circle, draw]
    (Node33)
    {~~~}; 
    \node[right=1\distIO of Node33, circle, draw]
    (Node34)
    {~~~};
    \node[right=1\distIO of Node34, circle, draw]
    (Node35)
    {~~~};
    \node[right=1\distIO of Node35, circle, draw]
    (Node36)
    {~~~};
    \node[right=1\distIO of Node36, circle, draw]
    (Node37)
    {~~~};
        \node[left=0.1 of Node11]
        (Node11l)
        {$p_1$};
        \node[left=0.1 of Node21]
        (Node21l)
        {$p_2$};
        \node[left=0.1 of Node31]
        (Node31l)
        {$p_3$};


\def\d{3}
  \draw
  ($(Node11.center)+(0,1.50mm*\d)$)
    --
  ($(Node12.center)+(0,1.50mm*\d)$)
  ;
  \draw
  ($(Node12.center)+(0,1.50mm*\d)$)
    --
  ($(Node13.center)+(0,1.50mm*\d)$)
  ;
  \draw($(Node13.center)+(0,1.50mm*\d)$)
    --
  ($(Node14.center)+(0,1.50mm*\d)$)
  ;
  \draw
  ($(Node14.center)+(0,1.50mm*\d)$)
    --
  ($(Node15.center)+(0,1.50mm*\d)$)
  ;
  \draw
  ($(Node15.center)+(0,1.50mm*\d)$)
    --
  ($(Node16.center)+(0,1.50mm*\d)$)
  ;
  \draw
  ($(Node16.center)+(0,1.50mm*\d)$)
    --
  ($(Node17.center)+(0,1.50mm*\d)$)
  ; 
        
  \draw[dashed]
  ($(Node11.center)+(0,0.75mm*\d)$)
    --
  ($(Node12.center)+(0,0.75mm*\d)$)
  ;
  \draw[dashed]
  ($(Node12.center)+(0,0.75mm*\d)$)
    --
  ($(Node13.center)+(0,0.75mm*\d)$)
  ;
  \draw[dashed]
  ($(Node13.center)+(0,0.75mm*\d)$)
    --
  ($(Node14.center)+(0,0.75mm*\d)$)
  ;
  \draw[dashed]
  ($(Node14.center)+(0,0.75mm*\d)$)
    --
  ($(Node35.center)+(0,0.00mm*\d)$)
  ;
  \draw[dashed]
  ($(Node35.center)+(0,0.00mm*\d)$)
    --
  ($(Node36.center)+(0,0.00mm*\d)$)
  ;
  \draw[dashed]
  ($(Node36.center)+(0,0.00mm*\d)$)
    --
  ($(Node37.center)+(0,0.00mm*\d)$)
  ;

  \draw[dashed]
  ($(Node11.center)+(0,0.0mm*\d)$)
    --
  ($(Node12.center)+(0,0.0mm*\d)$)
  ;
  \draw[dashed]
  ($(Node12.center)+(0,0.0mm*\d)$)
    --
  ($(Node23.center)+(0,0.0mm*\d)$)
  ;
  \draw[dashed]
  ($(Node23.center)+(0,0.0mm*\d)$)
    --
  ($(Node24.center)+(0,0.0mm*\d)$)
  ;
  \draw[dashed]
  ($(Node24.center)+(0,0.0mm*\d)$)
    --
  ($(Node25.center)+(0,0.0mm*\d)$)
  ;
  \draw[dashed]
  ($(Node25.center)+(0,0.0mm*\d)$)
    --
  ($(Node26.center)+(0,0.0mm*\d)$)
  ;
  \draw[dashed]
  ($(Node26.center)+(0,0.0mm*\d)$)
    --
  ($(Node37.center)+(0,1.50mm*\d)$)
  ;
  \draw[dashed]
  ($(Node11.center)+(0,-1.50mm*\d)$)
    --
  ($(Node32.center)+(0,-0.75mm*\d)$)
  ;
  \draw[dashed]
  ($(Node32.center)+(0,-0.75mm*\d)$)
    --
  ($(Node33.center)+(0,-0.75mm*\d)$)
  ;
  \draw[dashed]
  ($(Node33.center)+(0,-0.75mm*\d)$)
    --
  ($(Node34.center)+(0,-0.75mm*\d)$)
  ;
  \draw[dashed]
  ($(Node34.center)+(0,-0.75mm*\d)$)
    --
  ($(Node35.center)+(0,-0.75mm*\d)$)
  ;
  \draw[dashed]
  ($(Node35.center)+(0,-0.75mm*\d)$)
    --
  ($(Node36.center)+(0,-0.75mm*\d)$)
  ;
  \draw[dashed]
  ($(Node36.center)+(0,-0.75mm*\d)$)
    --
  ($(Node37.center)+(0,-0.75mm*\d)$)
  ;
  \draw
  ($(Node31.center)+(0,-1.50mm*\d)$)
    --
  ($(Node32.center)+(0,-1.50mm*\d)$)
  ;
  \draw
  ($(Node32.center)+(0,-1.50mm*\d)$)
    --
  ($(Node33.center)+(0,-1.50mm*\d)$)
  ;
  \draw
  ($(Node33.center)+(0,-1.50mm*\d)$)
    --
  ($(Node34.center)+(0,-1.50mm*\d)$)
  ;
  \draw
  ($(Node34.center)+(0,-1.50mm*\d)$)
    --
  ($(Node35.center)+(0,-1.50mm*\d)$)
  ;
  \draw
  ($(Node35.center)+(0,-1.50mm*\d)$)
    --
  ($(Node36.center)+(0,-1.50mm*\d)$)
  ;
  \draw
  ($(Node36.center)+(0,-1.50mm*\d)$)
    --
  ($(Node37.center)+(0,-1.50mm*\d)$)
  ;

\end{tikzpicture}}%
\caption{Pruning of the history on the left.}
\label{pruned-history}
\end{subfigure}%
\caption{A realizable history of the IO net of Figure \ref{fig:io} before and after pruning.}
\end{figure}
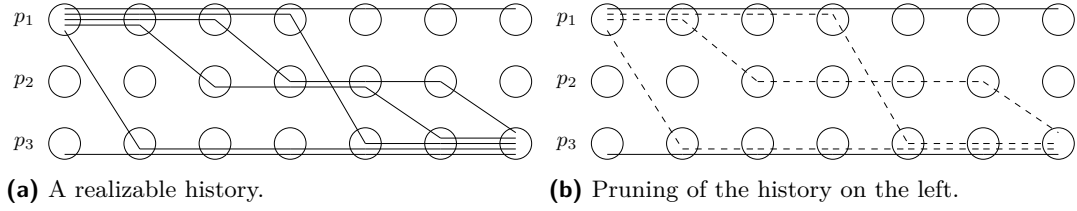

\begin{example}
\label{history-ex}
Figure \ref{figure-history} shows a realizable history of the IO net of Figure \ref{fig:io}.
It consists of six trajectories. The initial and final markings are $(5,0,1)$ and $(1,0,5)$. 
The history is realized by the firing sequence $t_3 t_1 t_1 t_3 t_2 t_4 $. 
\end{example}

\subparagraph*{Bunches and Pruning Lemma.} A \emph{bunch} is a multiset of trajectories with the same length and the same initial and final place. The Pruning Lemma states that every realizable history containing a bunch of trajectories from $p$ to $p'$ 
of size larger than the number of places $\placeCount$ can be ``pruned'',  meaning that the bunch can be replaced by a smaller one, 
also leading from $p$ to $p'$, while keeping the history realizable. (Notice, however, that the smaller bunch cannot always be chosen as a sub-multiset of the original one.) 

\begin{lemma}[Pruning Lemma]
\label{lm:pruning}
Let $N$ be an IO net with $\placeCount$ places. Let $H$ be a realizable history of $N$ containing a bunch $B\subseteq H$ of size larger than $\placeCount$. 
There exists a bunch $B'$ of size at most $\placeCount$ with the same initial and final
places as $B$, such that the history $H' \defeq H - B + B'$ (where $+$ and $-$ denote multiset addition and subtraction) is also realizable in $N$.
\end{lemma}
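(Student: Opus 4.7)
My plan is to reduce the pruning to a covering problem in a space--time DAG and then solve it using Dilworth's theorem.

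Fix a realization of $H$ by transitions $t_1,\ldots,t_{h-1}$ with multiplicities $k_1,\ldots,k_{h-1}$, where step $i$ has source, destination, and observed places $p_{s_i},p_{d_i},p_{o_i}$. Define the DAG $G$ on vertex set $\{1,\ldots,h\}\times P$ with a stay-edge $(i,q)\to(i+1,q)$ for every $q\in P$ and the unique move-edge $(i,p_{s_i})\to(i+1,p_{d_i})$. Every trajectory of $B$ is then a path from $(1,p)$ to $(h,p')$ in $G$. Call a step $i$ \emph{critical} if no trajectory of $H-B$ stays at $p_{o_i}$ between times $i$ and $i+1$; these are exactly the steps at which $B'$ must witness the observation. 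Every critical stay-edge $(i,p_{o_i})\to(i+1,p_{o_i})$ is traversed by at least one $(1,p)$-to-$(h,p')$ path in $G$, since $B$ itself traverses it.

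The lemma then reduces to the claim that at most $\placeCount$ paths from $(1,p)$ to $(h,p')$ in $G$ suffice to cover every critical stay-edge. I would prove this by subdividing each critical stay-edge with a new midpoint vertex $v_i$ and invoking Dilworth's theorem in the subdivided DAG. Two critical midpoints $v_i$ and $v_{i'}$ with the same observed place $q$ (and $i<i'$) are comparable: after passing $v_i$ one is at $(i+1,q)$, from where stay-edges allow remaining at $q$ until time $i'$ and then traversing $v_{i'}$. Therefore any antichain of critical midpoints contains at most one element per observed place, and so has size at most $\placeCount$. By Dilworth's theorem the critical midpoints admit a cover by at most $\placeCount$ chains, each of which extends to a full path from $(1,p)$ to $(h,p')$ in $G$ by prepending a prefix from $(1,p)$ to its first midpoint and appending a suffix from its last midpoint to $(h,p')$; such prefixes and suffixes exist because every critical position was already visited by a $(1,p)$-to-$(h,p')$ trajectory in $B$.

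Undoing the subdivisions yields a bunch $B'$ with $|B'|\leq \placeCount$ such that $H-B+B'$ is realizable: keep the transitions $t_1,\ldots,t_{h-1}$, adjust each $k_i$ to the number of move-edges actually used by trajectories in $H-B+B'$, and note that the observation condition holds at every step (by $H-B$ at non-critical steps, by $B'$ at critical ones). The trajectories of $B'$ are paths from $(1,p)$ to $(h,p')$ by construction, so $B'$ is a bunch with the required initial and final places. The main obstacle I anticipate is verifying the chain-to-path extension in the second paragraph: one has to argue carefully that the chains produced by Dilworth always enrich to full paths inside $G$, which requires exploiting that every critical position is reachable from $(1,p)$ and co-reachable to $(h,p')$ in $G$ through the original bunch $B$.
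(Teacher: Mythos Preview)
The paper does not prove the Pruning Lemma at all; it is imported from \cite{conf/apn/EsparzaRW19} and used as a black box in the proof of Theorem~\ref{thm:short-io}. So there is no in-paper argument to compare yours against.

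That said, your Dilworth-based argument is correct and quite clean. The crucial points all check out: the space--time DAG $G$ captures exactly the legal trajectories of a realizable history; the critical stay-edges are precisely the observations $B'$ must supply, and each one is traversed by some trajectory of $B$ (hence is both reachable from $(1,p)$ and co-reachable to $(h,p')$); two critical midpoints sharing an observed place are comparable via stay-edges, so antichains have size at most $n$; and by Dilworth the midpoints are covered by at most $n$ chains. Your anticipated obstacle about extending chains to full paths dissolves once you note that in a DAG every directed walk is automatically simple: concatenating a prefix from $(1,p)$, the reachability witnesses between consecutive chain elements, and a suffix to $(h,p')$ yields a walk, hence a path, through every midpoint of the chain. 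No further care is needed.

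Two cosmetic remarks. First, if there are no critical steps, Dilworth hands you the empty cover and $B'=\emptyset$; realizability of $H-B$ then holds, but the phrase ``same initial and final places as $B$'' is vacuous. You can simply add one trajectory of $B$ to $B'$ without exceeding $n$. Second, when $p_{s_i}=p_{d_i}$ the move-edge coincides with a stay-edge, so the adjusted $k_i$ is ambiguous; any choice works, and you may as well take $k_i=0$ in that degenerate case.
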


\begin{example}
\label{history-ex-pruned}
The realizable history $H$ of Figure \ref{figure-history}, leading from $(5,0,1)$ to $(1,0,5)$, has a bunch $B$ of size $4 \ge n$ from $p_1$ to $p_3$.
Figure \ref{pruned-history} shows a history $H'$, leading from $(4,0,1)$ to $(1,0,4)$,
 resulting from the application of the Pruning Lemma to $H$ and $B$.
The new bunch $B'$ from $p_1$ to $p_3$ given by the Pruning Lemma is drawn in dashed trajectories.
Notice that the trajectory of $B'$ that passes through $p_2$ does not appear in $B$.
The firing sequence $t_3 t_1 t_3 t_4 $ realizes $H'$.
\end{example}


\subparagraph*{Proof of the Shortening Theorem.} We need a Boosting Lemma, which states that duplicating 
a trajectory of a history of an IO net preserves realizability. Intuitively, duplicating a trajectory 
corresponds to adding a ``shadow'' to a token, that follows the token wherever it goes. Since an enabled IO transition can move 
arbitrarily many tokens from its source place to its destination place, the shadow token can always follow the primary token. 
A formal proof of the lemma is given in the Appendix.

\begin{restatable}[Boosting Lemma]{lemma}{BoostingLemma}
\label{lm:boosting}
Let $H$ be a realizable history of an IO net containing a trajectory $\tau$. 
The history $H + \multiset{\tau}$ is also realizable. 
\end{restatable}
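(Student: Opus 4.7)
The plan is to exhibit an explicit realization of $H + \multiset{\tau}$ by modifying the realization of $H$. Suppose $H$ has length $h$ and is realized by $t_1^{k_1} \cdots t_{h-1}^{k_{h-1}}$, and for each step $i$ let $p_{s,i}, p_{d,i}, p_{o,i}$ denote the source, destination and observed places of $t_i$. Define
$$ k_i' \defeq \begin{cases} k_i + 1 & \text{if } \tau(i) \neq \tau(i+1), \\ k_i & \text{otherwise}. \end{cases} $$
I will argue that $t_1^{k_1'} \cdots t_{h-1}^{k_{h-1}'}$ realizes $H + \multiset{\tau}$.

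The first step is the marking identity $M_{H+\multiset{\tau}}^i = M_H^i + \multiset{\tau(i)}$, which follows directly from the counting definition: the shadow trajectory simply adds one extra token at place $\tau(i)$ to each intermediate marking. The second step is to verify the combinatorial conditions for realizability at each step. When $\tau(i) = \tau(i+1)$, the shadow is a stationary trajectory and $k_i$ is unchanged, so the structural requirement that all non-moving trajectories satisfy $\tau'(i) = \tau'(i+1)$ continues to hold. When $\tau(i) \neq \tau(i+1)$, realizability of $H$ forces $\tau(i) = p_{s,i}$ and $\tau(i+1) = p_{d,i}$, so the shadow contributes exactly one additional $p_{s,i}p_{d,i}$-step, matching $k_i' = k_i + 1$. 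The observer condition, requiring some trajectory to sit at $p_{o,i}$ between times $i$ and $i+1$, is inherited from $H$.

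The third step is to verify firability of $t_i^{k_i'}$ from $M_{H+\multiset{\tau}}^i$. Because the augmented marking dominates $M_H^i$ everywhere, the stationary case $k_i' = k_i$ is immediate. In the boosting case $k_i' = k_i + 1$, firing $t_i$ one additional time consumes one extra token at $p_{s,i}$; since $\tau(i) = p_{s,i}$, the new marking supplies exactly one extra token there, which is precisely what is needed. The delicate subcase is self-observation $p_{s,i} = p_{o,i}$, where firing $t_i$ exactly $k$ times requires $k+1$ tokens at $p_{s,i}$, hence firing it $k_i + 1$ times requires $k_i + 2$ tokens there. Here realizability of $H$ already ensures a stationary trajectory at $p_{o,i} = p_{s,i}$ throughout step $i$, which gives $M_H^i(p_{s,i}) \geq k_i + 1$, and the shadow tips this over to $k_i + 2$ as required.

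The main obstacle is precisely this self-observation bookkeeping: one must be careful with the off-by-one in the token count when $p_{s,i} = p_{o,i}$. Everything else amounts to a straightforward translation between the trajectory and marking views, exploiting the IO property that a transition enabled once can fire arbitrarily many times as long as its source place has enough tokens and a single observer is present.
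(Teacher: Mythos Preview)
Your proof is correct and follows essentially the same approach as the paper: both define the identical modified realization $t_1^{k_1'}\cdots t_{h-1}^{k_{h-1}'}$ with $k_i' = k_i + 1$ exactly when $\tau(i)\neq\tau(i+1)$, and then verify that this realizes $H+\multiset{\tau}$. The paper packages the verification as an induction on $h$ and leaves the firability check implicit (it follows from the combinatorial conditions), whereas you verify each step directly and are more explicit about the self-observation case $p_{s,i}=p_{o,i}$; these are presentational differences only.
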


\ThmShortIO*
\begin{proof}(Sketch.)
We explain our proof strategy for the IO Shortening Theorem. Given $M' \trans{*} M$, we take 
a history $H$ such that $M' \trans{H} M$. Repeatedly applying the Pruning Lemma, 
we construct another realizable history $\widetilde{H}$ such that $\widetilde{T}_{p, q} = \min\{ n, T_{p,q} \}$ 
for every two places $p$ and $q$, where $T_{p,q}$ and  $\widetilde{T}_{p, q}$ denote the 
number of trajectories of $H$ and $\widetilde{H}$ leading from $p$ to $q$. Using the fact that $\widetilde{H}$ has at most $n^3$ trajectories,
we show that $\widetilde{H}$ can be chosen so that its length is bounded by 
$(\placeCount^3 +1)^{\placeCount}$. We are not done yet, because in general $\widetilde{H}$ does not
lead from $M'$ to $M$, we only have  $\widetilde{M'} \trans{\widetilde{H}} \widetilde{M}$ for markings $\widetilde{M'}, \widetilde{M}$
such that $\widetilde{M'} \leq M'$ and $\widetilde{M} \leq M$. 
In the last step we use the Boosting Lemma to add trajectories to $\widetilde{H}$ without increasing 
its length, yielding a realizable history $\overline{H}$ of the same length as $\widetilde{H}$, but satisfying 
$M' \trans{\overline{H}} M$. Finally, we extract from $\overline{H}$ a sequence $M' \trans{\sigma} M$ of accelerated length
at most $(\placeCount^3 +1)^{\placeCount}$. The full proof can be found in the Appendix.
\end{proof}

\section{Shortening Theorem for BIO nets}
\label{sec:proof-shortening-BIO}
The proof of the BIO Shortening Theorem (Theorem \ref{thm:bimo-intermediate-values}) is very involved. It follows the
proof outline  of Theorem \ref{thm:short-io}:  Given a firing sequence, consider a history $H$ realized by it, construct an 
equivalent ``small'' history $H'$, and extract from $H'$ a sequence of short accelerated length. However, since BIO nets can create and destroy tokens,  trajectories must be generalized to branching trajectories, which are trees 
of places; intuitively, the tree captures the cascade of tokens created by a token of the initial marking. 

We fix a \bio{} net $\net=(P,T,F)$ with $n$ places, and let $m_d := \max_{t \in T} |\postset{t} - \preset{t}|$ denote the maximum number of tokens created by a transition.

\subparagraph*{Branching trajectories.}
A \emph{branching trajectory} of $N$ is a nonempty, directed tree $\beta$ whose nodes are labeled with places of $P$. A node labeled by $p$ is called a \emph{$p$-node}.
The $i$-th level of $\beta$, denoted by $\beta(i)$, is the (possibly empty) set of nodes of $\beta$ at distance $(i-1)$ from the root.  We let $M_\beta(i)$ denote the multiset of places labeling the nodes of $\beta(i)$. Observe that $M_\beta(i)$ is a marking.   We say that $\beta$ has \emph{length} $l$ if $\beta(l)\neq \emptyset$ and
$\beta(l+1)= \emptyset$.

\subparagraph*{Histories and realizable histories.} A \emph{history} $H$ of length $l$ is a forest of branching trajectories of length at most $l$. We use histories to describe a  behaviour from an initial marking; the history contains a branching trajectory for each token of the initial marking.  

Given a history $H$ of length $h$ and an index $1 \leq i \leq h$, the $i$-th level of $H$ is 
the set $H(i) = \bigcup_{\beta \in H} \beta(i)$, and the \emph{the $i$-th marking of $H$}, denoted $M_{H}^i$, is the 
multiset $M_{H}^i = \sum_{\beta \in H} M_\beta(i)$. The markings  $M_{H}^1$ and $M_{H}^h$ are called the 
\emph{initial} and \emph{final} markings of $H$, and we write $M_{H}^1 \trans{H} M_{H}^h$.
If the length of $H$ is longer than the length of its branching trajectories, the final marking of  $H$ is the zero marking. Two histories are \emph{equivalent} if they have the same initial and final markings.

A history $H$ of length $h\geq 1$ is \emph{realizable} if there exist transitions $t_1, \ldots, t_{h-1} \in T$ and numbers $k_1, \ldots, k_{h-1} \geq 0$ such that for every $1 \leq i \leq h-1$ the set $H(i)$ can be partitioned into two sets: 
\begin{itemize}
\item A set $H_{a}(i)$ of exactly $k_i$ nodes labeled by the source place of $t_i$.
We call these nodes \emph{active} nodes.
Given a particular active node, say $v$, the multiset of labels of its children is the (possibly empty) multiset $\multiset{p_{d_1}, \ldots, p_{d_k}}$ of destination places of $t_i$. 
\item A set $H_p(i)$ of nodes, each of them with exactly one child, carrying the same label as their parents. 
We call these nodes \emph{passive} nodes.
This set must contain at least one node labeled by the place $p_o$ observed by $t_i$.
\end{itemize}
\noindent We say that the sequence $t_1^{k_1} \cdots t_{h-1}^{k_{h-1}}$ \emph{realizes} $H$.
It follows easily from the definitions that $M_{H}^1 \trans{t_1^{k_1}}M_{H}^2 \cdots  M_{H}^{h-1} \trans{t_{h-1}^{k_{h-1}}} M_{H}^h$ holds (where $M \trans{t^0} M'$ if{}f $M=M'$).

From this definition we easily obtain:
\begin{itemize}
\item $M \trans{*} M'$ if{}f there exists a realizable history with $M$ and $M'$ as initial and final markings. 
\item Every firing sequence that realizes a history of length $h$ has accelerated length at most $h$.
\end{itemize}

\begin{figure}
\centering
\resizebox{13cm}{!}{
\input{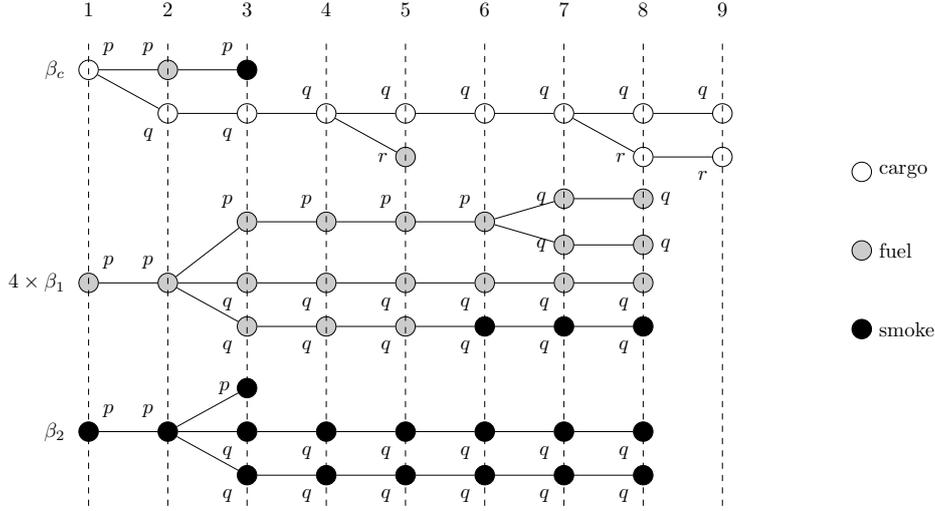}
}
\caption{A decorated realizable history of a BIO net.}
\label{fig:Bp(i)}
\end{figure}
\begin{example}-
\label{BIOhistory-ex}
Figure \ref{fig:Bp(i)} shows a realizable history $H$ of a \bio{} net with places $\{ p, q, r \}$.
$H$ consists of six branching trajectories: $\beta_{c}$, four copies of $\beta_1$, and $\beta_2$. 
The initial and final markings are $(6,0,0)$ and $(0,1,1)$.  The transition $t_i$ executed at step $i$  is
$$\begin{array}{llll}
t_1 = p \trans{p} \multiset{q,p} & t_2 = p \trans{p} \multiset{2q,p} &
t_3 = p \trans{q} \emptyset & t_4 = q \trans{q} \multiset{r,q} \\ 
t_5 = r \trans{p} \emptyset & t_6 = p \trans{q} \multiset{2q} &
t_7 = t_4 &  t_8 = q \trans{r} \emptyset
\end{array}$$
\noindent 
where $t=x \trans{y} m$ denotes that $x$ is the source place, $y$ the observed place, and $m$ the multiset of destination places of $t$. The firing sequence that realizes $H$ is $t_1 \, t_2^5 \, t_3^2 \, t_4 \, t_5 \, t_6^4 \, t_7 \, t_8^{18}$.
While the final marking of $H$ is produced by $\beta_{c}$ only,  $\beta_{c}$ is not realizable on its own. For example, the $r$-node
of $\beta_{c}$ at level 5  is destroyed in the next step by the firing of $t_5$, but $t_5$ can only occur if there is
at least one token in place $p$;  this token is supplied by $\beta_1$ or $\beta_2$. We can think of $\beta_1$ and $\beta_2$ as branching trajectories that eventually become extinct, but before extinction provide tokens that need to be observable to fire some transitions.
\end{example}

\subparagraph*{Cargo, fuel, and smoke of a history.} A \emph{decoration} $\dec{H}$ of a history $H$ 
consists of the history $H$ itself and a partition of the nodes of $H$ into \emph{cargo}, \emph{fuel}, and \emph{smoke}  
nodes. Figure \ref{fig:Bp(i)} shows not only a history $H$ but also a decoration $\dec{H}$.
Cargo nodes are white, grey nodes are fuel, and black nodes are smoke. Before giving the formal definition of
a decoration, let us provide some intuition. Think of the sequence of markings of a history as the sequence
of states of a ship. All nodes of the final marking are cargo, they are what the ship ``delivers'' in the end.
At any other marking, the cargo nodes are the ``causal predecessors'' of the final cargo nodes. Every decoration
has the same cargo nodes, they only differ in the partition of the other nodes into fuel and smoke. Intuitively,
a decoration reserves the right to use fuel nodes to fire transitions (a $p$-node can be ``used''  to fire  a transition that observes $p$), and commits to never using a smoke node or its descendants. 
The most conservative decoration (which always exists) is the one that declares all non-cargo nodes as fuel. 
Our first goal will be to show that every history has an equivalent \emph{fuel-efficient} history that delivers the same 
cargo but admits a low-fuel decoration. 

Formally, a \emph{decoration} of $H$ is a partition of the nodes of $H$ 
into \emph{cargo}, \emph{fuel}, and \emph{smoke} nodes satisfying the following conditions:
\begin{itemize}
\item A node of $H$ is a \emph{cargo node} if{}f it has at least one descendant in $H(l)$.
\item All descendants of smoke nodes are smoke nodes.
\item For every place $p$ and level $i$, if $H(i)$ contains smoke $p$-nodes, then it also contains fuel $p$-nodes.
(``No smoke without fuel''. Intuitively, the smoke $p$-nodes are not needed because the fuel $p$-nodes can be used instead.)
\end{itemize}
\noindent 
A \emph{decorated history} is a pair consisting of $H$ and a decoration of $H$. 
Observe that along all paths cargo comes before fuel, and fuel before smoke. Graphically, white nodes (if any) come before grey nodes (if any), and grey nodes before black nodes (if any).

%

\subparagraph*{Every history is equivalent to a fuel-efficient history.}

We prove that every realizable history has an equivalent realizable history with a \emph{fuel-efficient}
decoration, defined as follows:

\begin{definition}
Let $\dec{H}$ be a decorated history.
A place $p$ is \emph{wasteful at level $i$} if $\dec{H}(i)$ contains more than $n$ fuel $p$-nodes. A place $p$ is \emph{wasteful} in $\dec{H}$ if 
it is wasteful at some level; otherwise $p$ is \emph{fuel-efficient} in $\dec{H}$.  Finally, 
$\dec{H}$ is \emph{fuel-efficient} if all places are fuel-efficient.
\end{definition}

\begin{example}
Since $n=3$, in the decorated history of Figure \ref{fig:Bp(i)} place $p$ is wasteful at levels 1 to 6, and $q$ is wasteful at levels 3 to 8. 
The history is not fuel-efficient. 
\end{example}

The proof is based on a Replacement Lemma,
which plays the same role as the combination of the Pruning and Boosting Lemmas for IO nets. 
We start by introducing a definition.

\begin{definition}
The \emph{$(p, i)$-bunch} of $H$, denoted $B_p(i)$, is the set of subtrees of $H$ 
rooted at the $p$-nodes of $H(i)$. 
\end{definition}

Loosely speaking, the Replacement Lemma shows that if $i$ is the earliest level at which $p$ is wasteful, then
the bunch $B_p(i)$ of trajectories can be replaced so that the new history has a decoration where $p$ is not wasteful anymore.
The lemma shows how to do this while ensuring that the histories before and after the replacement are equivalent.
Repeated applications of the Replacement Lemma yield a fuel-efficient history. 

Formally, given a history $B'_p$ with $p$-nodes as roots and with the same number of trees as $B_p(i)$, we let $H[B'_p / B_p(i)]$ denote the result of replacing each tree of $B_p(i)$ by a different tree of $B'_p$. For this we assume that $B_p(i)$ and $B'_p$ have been enumerated in some way, and the $j$-th tree of $B_p(i)$ is replaced by the $j$-th tree of $B'_p$. We state the Replacement Lemma:

\begin{restatable}[Replacement Lemma]{lemma}{BIOPruningLm}
Let $\dec{H}$ be a decoration of a realizable history $H$ such that $p$ is wasteful, and $i$ is the earliest level 
at which $p$ is wasteful. There exists a history $B'_p$ such that $H'\defeq H[B_p'/B_p(i)]$  is realizable, equivalent to $H$, and has a decoration whose fuel-efficient places contain all fuel-efficient places of $\dec{H}$ and $p$. 
\end{restatable}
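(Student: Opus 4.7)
The plan is to generalize the Pruning-and-Boosting combination used in the IO case to the BIO setting of decorated histories. Since levels $1, \ldots, i-1$ of $\dec{H}$ are unchanged under the bunch replacement and $p$ is not wasteful there, all the action takes place at level $i$ and below. I would first partition $B_p(i)$ according to the decoration into cargo trees $B_c$, fuel-rooted trees $B_f$, and smoke-rooted trees $B_s$, noting that $|B_f| > n$ by hypothesis. The new bunch $B'_p$ keeps $B_c$ intact, so that the cargo descending to level $l$ is preserved and hence the final marking of $H'$ matches that of $H$ (the initial marking is trivially preserved, since levels below $i$ are untouched); $B'_p$ replaces $B_f \cup B_s$ by an equinumerous family of new $p$-rooted trees.

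Concretely, I would select at most $n$ \emph{representative} $p$-rooted trees that together supply every required $p$-observation and every required active $p_s$-source at each level $j \geq i$. Every other new tree in $B'_p$ is a \emph{passive chain}: each node has a single $p$-labeled child, contributing a $p$-node at every subsequent level without ever branching or becoming active. In the new decoration of $H'$, the representatives and their $p$-descendants are fuel, the passive chains and their descendants are smoke, cargo classification is inherited from $H$, and nodes outside $B'_p$ retain their classification from $\dec{H}$. Since the passive-chain structure caps the number of fuel $p$-nodes at each $j \geq i$ by $n$, the place $p$ becomes fuel-efficient; places already fuel-efficient in $\dec{H}$ remain so, because the surgery is internal to subtrees rooted at $p$-nodes of level $i$ and the exterior decoration is untouched.

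Realizability of $H'$ is witnessed by reusing the transitions of $H$'s firing sequence with firing counts $k'_j$ adjusted to the number of active $p_s^j$-nodes actually present: passive chains contribute no activity, while the cargo trees together with the representatives supply all necessary active sources and any $p$-observations, and observations of places $q \neq p$ are supplied by the untouched rest of $H$. The main obstacle I anticipate is the global \emph{no smoke without fuel} condition: the forced children of active representative $p$-nodes at levels where the firing has destinations $\{p_{d_1}, \ldots, p_{d_k}\}$ inevitably introduce smoke $q$-nodes for various $q \neq p$, and each such pair $(q, j)$ demands a fuel $q$-node at the same level. The crux is therefore a careful pigeonhole-style choice of the $n$ representatives across the relevant observation levels and destination profiles, leveraging the slack $|B_f| > n$ against only $n$ places, so that every $(q, j)$ covering requirement is inherited from either $\dec{H}$ or the freshly introduced representatives without disturbing any place that was fuel-efficient before.
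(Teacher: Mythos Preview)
Your proposal has two genuine gaps that the paper's construction addresses explicitly.

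First, your passive chains cannot simply be infinite $p$-chains. A chain that ``contributes a $p$-node at every subsequent level without ever branching or becoming active'' will place a $p$-token at the final level $h$; with $|B_p(i)|-|B_c|-n$ such chains, the final marking of $H'$ differs from that of $H$, and equivalence fails. Moreover, declaring all these chains smoke forces a fuel $p$-node at \emph{every} level $j\geq i$, which your $n$ representatives need not provide once the original fuel trees have moved their tokens away from $p$. The paper resolves this by terminating each surplus chain at level $\mathit{last}(p)$ and appending a \emph{destroyer}: a subtree copied from a transportation $p$-node of $\dec{H}$ at $(\,p,\mathit{last}(p)\,)$, which is guaranteed to die out before level $h$ and whose steps are already compatible with the transitions $t_j$ of $H$.

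Second, the assertion that ``observations of places $q\neq p$ are supplied by the untouched rest of $H$'' is false in general. A transition $t_j$ may observe a place $q$ whose only witnesses at level $j$ lie inside the transportation subtrees of $B_p(i)$; after you replace those subtrees by $p$-chains, the observation disappears and $H'$ is not realizable. Your final paragraph acknowledges a related difficulty (the no-smoke-without-fuel constraint for the $q$-children of active representative nodes) but defers it to an unspecified ``pigeonhole-style choice'' of representatives. No such choice among the original trees need exist: a single original tree provides a $q$-node only at scattered levels, and $n$ of them need not cover the full interval of levels where $q$ must be observed. The paper's solution is constructive and different in spirit: it introduces the reachability set $\mathcal{R}_{p,i}$ of place-levels $(q,j)$ reachable from $(p,i)$ via transportation steps, and for each such $q$ builds a dedicated tree $\beta_q$ carrying a fuel $q$-node at \emph{every} level from $\mathit{first}(q)$ to $\mathit{last}(q)$, with all side branches destroyed as above. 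A subsequent re-decoration then turns any remaining fuel $q$-nodes outside $B_f$ into smoke, which is what actually guarantees that $p$ becomes fuel-efficient without spoiling previously fuel-efficient places. The reachability set and the destroyer mechanism are the missing ideas in your outline.
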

\begin{proof}(Sketch.)
We describe the history $B'_p$, illustrating the construction on the decorated history of Figure \ref{fig:Bp(i)}. In this example $p$ is already wasteful at level $i=1$, and $B_p(1) = H$. So all of $H$ is replaced by 
the bunch $B'_p$, shown in Figure \ref{fig:B'p}.

\begin{figure}[ht]
\centering
\resizebox{13cm}{!}{
\input{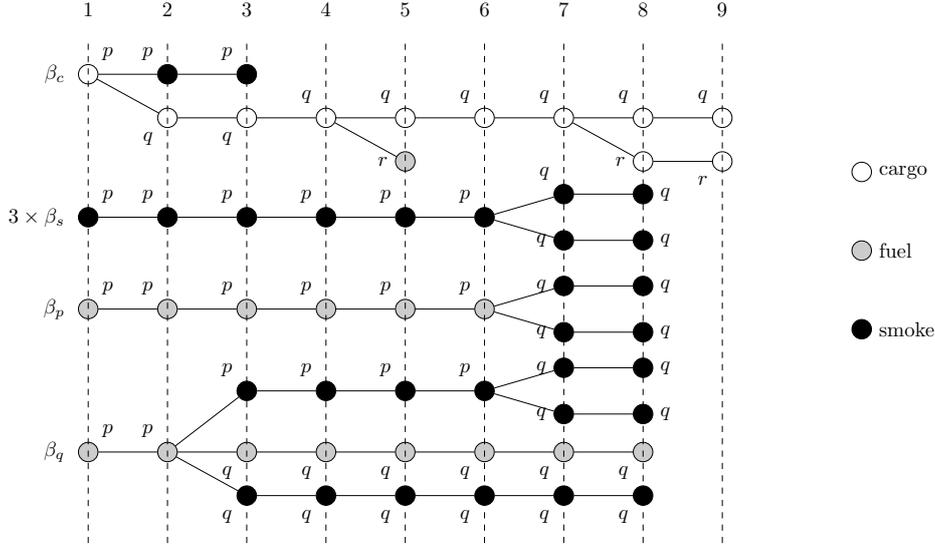}
}
\caption{Result of replacing $B_p(1)$ in the history of Figure \ref{fig:Bp(i)}.}
\label{fig:B'p}
\end{figure}

In order to describe $B'_p$ we need some notions. 
We call smoke and fuel nodes \emph{transportation} nodes.
Given a decorated history $\dec{H}$, 
let $\mathit{last}(p)$ denote the last level $i$ such that $\dec{H}(i)$ contains a transportation $p$-node.
A  \emph{place-level} is a pair $(q, j)$, where $q$ is a place and $j$ is a level of $H$.
A \emph{path} of place-levels is a concatenation of ``steps'' of two types:  ``doing nothing steps'' from $(r,l)$ to $(r,l+1)$ such that $l < last(r)$, 
and ``transportation steps'' from $(r,l)$ to $(s,l+1)$ such that some transportation $r$-node of $\dec{H}(l)$ has an $s$-child in $\dec{H}(l+1)$. 
We say that $(q, j)$ is \emph{reachable} from $(p,i)$ if there is a path from $(p,i)$ to $(q,j)$, and let $\bunchreach$ be the set of all place-levels $(q,j)$ reachable from $(p,i)$. 
In our example we have $\mathcal{R}_{p,1}=\{(p,1), \ldots, (p,6), (q,3), \ldots, (q, 8) \}$. (Observe that $(r, 5)$ does not belong to $\mathcal{R}_{p,1}$, because its parent is a cargo node.)

$B_p'$ is the union of three sets of branching trajectories, $B_c$, $B_f$, and $B_s$ (where $c, f, s$ stand for cargo, fuel, and smoke):
\begin{itemize}
\item $B_c$ contains all branching trajectories of $B_p(i)$ rooted at a cargo node.
(In Figure \ref{fig:B'p}, $B_c$ is the singleton set $\set{\beta_{c}}$.) 
The decoration of $B_c$ is chosen so that it conserves the cargo nodes of $\dec{H}$.
Intuitively, $B_c$ ensures that $H'$ delivers the same cargo as $H$. 
\item $B_f$ contains a branching trajectory $\beta_q$ for every $q$ such that $(q,j) \in \bunchreach$ for some $j$. 
(In Figure \ref{fig:B'p}, $B_f$ contains the two trees $\beta_p$ and $\beta_q$.) Intuitively, these trajectories guarantee that the new set $\bunchreach$ of $\dec{H'}$ is a superset of the old one, and so that any transition firing that relies on observing some place $q$ at level $j$ can still occur, because $(q,j)$ is still reachable from $(p, i)$. 

Let us now define $\beta_q$.  (Figure \ref{fig:betaq} shows $\beta_q$ for the history of Figure \ref{fig:B'p}.)
Let $\mathit{first}(q)$ be the smallest $j$ such that $(q, j) \in \bunchreach$. 
There is a shortest path from $(p,i)$ to $(q,\mathit{first}(q))$, and
each step of the path corresponds to doing nothing or to executing a transition once.  (In Figure \ref{fig:betaq} we have $(p,i)=(p,1)$, $(q,\mathit{first}(q))=(q,3)$, and the path corresponds to doing nothing in the first step, and then firing $t_2$.) Let $\delta_{q}$ be the corresponding branching trajectory. (In Figure \ref{fig:betaq}, $\delta_{q}$ is the tree contained in the blue area.) First we append a path to each leaf of $\delta_q$: If the leaf is, say, an $r$-node at level $j$, then we append to it a path of $r$-nodes from level $j$ to level $\mathit{last}(r)$. (Red area of Figure \ref{fig:betaq}.) 
Then, we append to 
the end of each path a \emph{destroyer}, i.e., a tree that makes the token disappear. 
We choose for this any subtree of $\dec{H}$ rooted in a transportation node of $(r, \mathit{last}(r)).$
(Green area of Figure \ref{fig:betaq}; in order to destroy a $p$-node we first transform it into two $q$-nodes by firing $t_6$, wait while $t_7$ is fired in another part of the history, and then destroy the $q$-nodes by firing $t_8$ twice. The two $q$-nodes are destroyed by firing $t_8$ twice.) 
The decoration of $\beta_q$ is chosen so that there is a fuel path rooted in $(p,i)$ containing $q$-nodes from levels $\mathit{first}(q)$ to $\mathit{last}(q)$, and the rest is smoke.

\item $B_s$ contains $|B_p(i)| - |B_c| - |B_f|$ copies of a tree of smoke nodes $\beta_{s}$, consisting of a path of $p$-nodes, leading from level $i$ to level $\mathit{last}(p)$, appended with a destroyer. Intuitively, this is smoke added to ensure that $H(i)=H'(i)$.
\end{itemize}

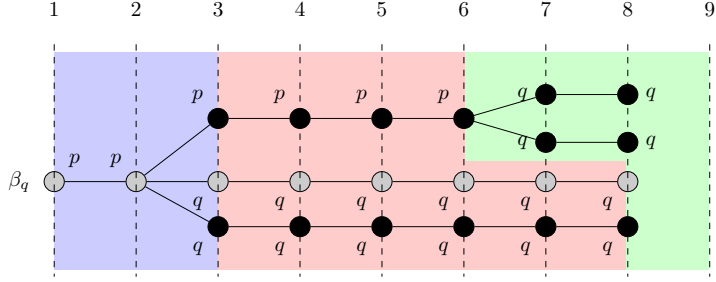
\begin{figure}[ht]
\resizebox{!}{4cm}{%
%

\newcommand{\dist}{*0.7}
\newcommand{\distbranchbig}{*0.4}
\newcommand{\distbranch}{*0.05}
\newcommand{\disttree}{*0.1}

\begin{tikzpicture}

\tikzstyle{bone}=[circle, draw,minimum size=2mm]
\tikzstyle{fat}=[circle, draw,minimum size=1mm,fill=black!20]
\tikzstyle{meat}=[circle, draw,minimum size=1mm,fill=black]
\tikzstyle{delta}=[circle, draw,minimum size=1mm,fill=blue!20]
\tikzstyle{branch}=[circle, draw,minimum size=1mm,fill=red!20]
\tikzstyle{destroy}=[circle, draw,minimum size=1mm,fill=green!20]

\draw[fill=blue!20, draw=none] (0.0, -0.3) rectangle (2.7, -3.9) {};
\draw[fill=red!20, draw=none] (2.7, -0.3) rectangle (6.75, -3.9) {};
\draw[fill=red!20, draw=none] (6.75, -2.1) rectangle (9.4, -3.9) {};
\draw[fill=green!20, draw=none] (6.75, -0.3) rectangle (10.75, -2.1) {};
\draw[fill=green!20, draw=none] (9.4, -2.1) rectangle (10.75, -3.9) {};

    \node[bone,draw=none]
    (Node01) [label=above:$1$]
    {};
    \node[below=3\dist of Node01, fat]
    (Node131) [label=above right:$p$]
    {};
    \node[below=2\dist of Node131, bone, draw=none]
    (Node151)
    {};

    \node[right=1 of Node131, fat]
    (Node132) [label=above left:$p$]
    {};
    \node[right=1 of Node132, fat]
    (Node133)  [label= below left:$q$]
    {};
    \node[right=1 of Node133, fat]
    (Node134)  [label= below left:$q$]
    {};
    \node[right=1 of Node134, fat]
    (Node135)  [label= below left:$q$]
    {};
    \node[right=1 of Node135, fat]
    (Node136)  [label= below left:$q$]
    {};
    \node[right=1 of Node136, fat]
    (Node137)  [label= below left:$q$]
    {};
    \node[right=1 of Node137, fat]
    (Node138)  [label= below left:$q$]
    {};
    \node[above=0.7 of Node133, meat]
    (Node113) [label=above left:$p$]
    {};
    \node[below=1\distbranchbig of Node133, meat]
    (Node143) [label=below left:$q$]
    {};
    \node[right=1 of Node113, meat]
    (Node114) [label=above left:$p$]
    {};
    \node[right=1 of Node114, meat]
    (Node115) [label=above left:$p$]
    {};
    \node[right=1 of Node115, meat]
    (Node116) [label=above left:$p$]
    {};
    \node[right=1 of Node116, bone, draw=none]
    (Node117)
    {};
    \node[above=1\distbranch of Node117, meat]
    (Node107) [label=left:$q$]
    {};
    \node[below=1\distbranch of Node117, meat]
    (Node127) [label=left:$q$]
    {};
    \node[right=1 of Node107, meat]
    (Node108) [label= right:$q$]
    {};
    \node[right=1 of Node127, meat]
    (Node128) [label= right:$q$]
    {};
    \node[right=1 of Node143, meat]
    (Node144) [label=below left:$q$]
    {};
    \node[right=1 of Node144, meat]
    (Node145) [label=below left:$q$]
    {};
    \node[right=1 of Node145, meat]
    (Node146) [label=below left:$q$]
    {};
    \node[right=1 of Node146, meat]
    (Node147) [label=below left:$q$]
    {};
    \node[right=1 of Node147, meat]
    (Node148) [label=below left:$q$]
    {};
    
    \node[right=1 of Node01, bone, draw=none]
    (Node02) [label=above:$2$]
    {};
    \node[right=1 of Node02, bone, draw=none]
    (Node03) [label=above:$3$]
    {};
    \node[right=1 of Node03, bone, draw=none]
    (Node04) [label=above:$4$]
    {};
    \node[right=1 of Node04, bone, draw=none]
    (Node05) [label=above:$5$]
    {};
    \node[right=1 of Node05, bone, draw=none]
    (Node06) [label=above:$6$]
    {};
    \node[right=1 of Node06, bone, draw=none]
    (Node07) [label=above:$7$]
    {};
    \node[right=1 of Node07, bone, draw=none]
    (Node08) [label=above:$8$]
    {};
    \node[right=1 of Node08, bone, draw=none]
    (Node09)  [label=above:$9$]
    {};
    \node[right=1 of Node151, bone, draw=none]
    (Node152)
    {};
    \node[right=1 of Node152, bone, draw=none]
    (Node153)
    {};
    \node[right=1 of Node153, bone, draw=none]
    (Node154)
    {};
    \node[right=1 of Node154, bone, draw=none]
    (Node155)
    {};
    \node[right=1 of Node155, bone, draw=none]
    (Node156)
    {};
    \node[right=1 of Node156, bone, draw=none]
    (Node157)
    {};
    \node[right=1 of Node157, bone, draw=none]
    (Node158)
    {};
    \node[right=1 of Node158, bone, draw=none]
    (Node159)
    {};
    
%

    
    \draw[-] (Node131) to
    (Node132);
    \draw[-] (Node132) to
    (Node133);
    \draw[-] (Node132) to
    (Node113);
    \draw[-] (Node132) to
    (Node143);
    \draw[-] (Node133) to
    (Node134);
    \draw[-] (Node134) to
    (Node135);
    \draw[-] (Node135) to
    (Node136);
    \draw[-] (Node136) to
    (Node137);
    \draw[-] (Node137) to
    (Node138);
    \draw[-] (Node113) to
    (Node114);
    \draw[-] (Node114) to
    (Node115);
    \draw[-] (Node115) to
    (Node116);
    \draw[-] (Node116) to
    (Node107);
    \draw[-] (Node116) to
    (Node127);
    \draw[-] (Node107) to
    (Node108);
    \draw[-] (Node127) to
    (Node128);
    \draw[-] (Node143) to
    (Node144);
    \draw[-] (Node144) to
    (Node145);
    \draw[-] (Node145) to
    (Node146);
    \draw[-] (Node146) to
    (Node147);
    \draw[-] (Node147) to
    (Node148);
    
    \draw[-, dashed] (Node01) to
    (Node151);
    \draw[-, dashed] (Node02) to
    (Node152);
    \draw[-, dashed] (Node03) to
    (Node153);
    \draw[-, dashed] (Node04) to
    (Node154);
    \draw[-, dashed] (Node05) to
    (Node155);
    \draw[-, dashed] (Node06) to
    (Node156);
    \draw[-, dashed] (Node07) to
    (Node157);
    \draw[-, dashed] (Node08) to
    (Node158);
    \draw[-, dashed] (Node09) to
    (Node159);

        \node[left=1\disttree of Node131](Tree4){$\beta_{q}$};
        
\end{tikzpicture}
\caption{Illustration of the construction of the set $B_f$ of trees.}
\label{fig:betaq}
\end{figure}

This concludes the description of $B'_p$. 
There are at most $|B_f| \le n$ fuel nodes per level in $B'_p$, so $p$ is fuel-efficient. 
The proof that $H'$ is realizable, equivalent to $H$, and has a decoration in which there are no new wasteful places can be found in  the appendix.
\end{proof}

Repeated applications of the Replacement Lemma yield the existence of a fuel-efficient decoration $\dec{H'}$ of a history $H'$
equivalent to $H$.

%
\begin{example}
Applying the Replacement Lemma to $p$ and $i:=1$ and the decorated history $\dec{H}$ of Figure \ref{fig:Bp(i)} 
yields the decorated history $\dec{H'}$ of Figure \ref{fig:B'p}.
Like $H$, it leads from $(6,0,0)$ to $(0,1,1)$. 
It is realized by $t_1 \, t_2 \, t_3 \, t_4 \, t_5 \, t_6^5 \, t_7 \, t_8^{12}$.
Place $p$ is no longer wasteful in $\dec{H'}$, and in fact all places are fuel-efficient.
\end{example}

The next step of the proof is the Unique Footprint Lemma. Loosely speaking,
it shows that for every history there exists an equivalent history in which 
any two levels differ in the cargo, the fuel, or the \emph{support} of the smoke. This allows us to bound the 
length of the history. 
We need a preliminary lemma. Let $\dec{H}_c(i)$, $\dec{H}_f(i)$, $\dec{H}_s(i)$ denote the multisets of cargo, fuel, and smoke
nodes of $\dec{H}(i)$. Intuitively, the Smoke Irrelevance lemma shows that we can always deliver the same
cargo using the same fuel \emph{independently} of the initial amount of smoke.

\begin{restatable}[Smoke Irrelevance Lemma]{lemma}{SmokeIrrelevance}
\label{lm:blue-irrelevance}
Let $\dec{H}$ be a realizable decorated history of length $h$, and let $\mu$ be any 
multiset of places such that $\support{\mu} \subseteq  \support{\dec{H}_s(1)}$.
There exists a realizable decorated history $\dec{{H'}}$ of length $h$ such that 
 $\dec{H'}_s(1) = \mu$, and $\dec{H'}_c(i) = \dec{H}_c(i)$  and $\dec{H'}_f(i) = \dec{H}_f(i)$ for every level $1 \leq i \leq h$.
\end{restatable}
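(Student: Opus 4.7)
The plan is to construct $\dec{H'}$ by a targeted replacement of the ``pure smoke'' subtrees rooted at level 1. Observe that a smoke node of $\dec{H}$ at level 1 is the root of a branching trajectory whose nodes are \emph{all} smoke (since every descendant of a smoke node is smoke); I will call such a trajectory a \emph{smoke tree}. For each place $p \in \support{\dec{H}_s(1)}$, fix one canonical smoke tree $\tau_p$ rooted at a smoke $p$-node at level 1 of $\dec{H}$; this choice is possible by the support hypothesis. Define $\dec{H'}$ as follows: remove from $\dec{H}$ every smoke tree rooted at a level-1 smoke node, and then add $\mu(p)$ fresh copies of $\tau_p$ (with their nodes still labeled as smoke) for each $p \in \support{\mu}$. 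By construction the cargo and fuel nodes of $\dec{H'}$ coincide with those of $\dec{H}$ at every level, so $\dec{H'}_c(i)=\dec{H}_c(i)$ and $\dec{H'}_f(i)=\dec{H}_f(i)$; moreover $\dec{H'}_s(1)=\mu$ as required.

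To establish realizability of $\dec{H'}$ I use the same sequence of transitions $t_1,\ldots,t_{h-1}$ that realizes $\dec{H}$, and declare a node of $\dec{H'}$ active or passive according to the role of the corresponding node in $\dec{H}$ (for untouched nodes), or according to the role of the corresponding node of $\tau_p$ in $\dec{H}$ (for nodes of an added copy). Let $k'_i$ denote the resulting number of active nodes at level $i$. The local marking change produced by $\tau_p$ between levels $i$ and $i+1$ equals $a_i^{\tau_p}\,(\postset{t_i}-\preset{t_i})$, where $a_i^{\tau_p}$ is the number of active nodes of $\tau_p$ at level $i$; this identity holds because $\tau_p$ is a substructure of the realizable $\dec{H}$. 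Summing the contributions of the untouched part and of the added copies yields
\begin{equation*}
M_{\dec{H'}}^{i+1}-M_{\dec{H'}}^i \;=\; k'_i\,(\postset{t_i}-\preset{t_i}),
\end{equation*}
so the marking equations of a realization are satisfied. For the observability constraint, fix a level $i$ and let $p_o$ be the observed place of $t_i$. Either $\dec{H}$ contains no smoke $p_o$-node at level $i$, in which case every $p_o$-node at level $i$ of $\dec{H}$ is cargo or fuel and is preserved in $\dec{H'}$, or $\dec{H}$ does contain a smoke $p_o$-node at level $i$ and then ``no smoke without fuel'' forces a fuel $p_o$-node at level $i$, which is likewise preserved. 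Either way $M_{\dec{H'}}^i(p_o)\ge 1$.

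It remains to check that the decoration of $\dec{H'}$ is admissible. The cargo condition holds because the added copies contain only smoke nodes and hence (being smoke in $\dec{H}$) none at level $h$, so the set of nodes with a descendant in the final level is unchanged. The hereditary property that all descendants of a smoke node are smoke is preserved, since we only enlarge the smoke region by attaching wholly-smoke copies. Finally ``no smoke without fuel'' at every level $i>1$ holds because any smoke $q$-node of $\dec{H'}(i)$ either comes from the untouched part, where the presence of a fuel $q$-node at level $i$ was already guaranteed in $\dec{H}$, or from an added copy of $\tau_p$, in which case $\tau_p$ itself contained a smoke $q$-node at level $i$ of $\dec{H}$ and hence $\dec{H}(i)$ contained a fuel $q$-node; in both cases the fuel $q$-node survives into $\dec{H'}$. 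The main subtlety is the realizability check, and what makes it work is precisely the ``no smoke without fuel'' rule, which guarantees that the observation $t_i$ requires is always provided by a fuel node that the construction never touches.
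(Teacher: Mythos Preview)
Your construction and argument are essentially the same as the paper's. Both build $\dec{H'}$ by replacing the branching trajectories rooted at level-$1$ smoke nodes with the right number of copies of smoke trees already present in $\dec{H}$, leave all cargo and fuel structure untouched, and argue realizability with the same transitions $t_1,\ldots,t_{h-1}$ by invoking ``no smoke without fuel'' to guarantee that an observer for $t_i$ survives at every level. The only cosmetic difference is that you remove \emph{all} level-$1$ smoke trees and then add back $\mu(p)$ fresh copies of a fixed representative $\tau_p$, whereas the paper deletes or adds only the difference $|\mu(p)-\nu(p)|$; the resulting histories coincide up to which particular smoke trees are retained. Your explicit verification of the three decoration conditions (cargo unchanged, smoke hereditary, no smoke without fuel) is in fact more detailed than what the paper spells out.

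One point where you are slightly less precise than the paper: for observability at level $i$ you conclude only that $M_{\dec{H'}}^i(p_o)\ge 1$, whereas realizability requires a \emph{passive} $p_o$-node (one with a single same-labeled child). The paper phrases this as ``there is also a pair of such fuel $p_o$-nodes in $\dec{H}(i)$ and $\dec{H}(i+1)$'', i.e.\ it asserts the surviving fuel $p_o$-node is itself passive. Your argument would be tightened by making this explicit; however, the paper's own justification of that step is equally brief (it simply cites ``the property of smoke nodes''), so your proof is at the same level of rigor as the paper's on this point.
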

\begin{proof}(Sketch.) Rename $\nu \defeq\dec{H}_s(1)$ for clarity. To construct $\dec{{H'}}$,
start with $\dec{H}$, and do the following for every place $p \in P$. If $\mu(p) \le \nu(p)$, then delete $\nu(p) - \mu(p)$ smoke $p$-nodes from $\dec{H}(1)$ as well as all their descendants (which are all smoke nodes by definition). If $\mu(p) > \nu(p)$, then add to $\dec{H}$ $(\mu(p) - \nu(p))$ copies of an arbitrary tree $\beta$ of smoke nodes of $\dec{H}$ rooted in $(p,1)$. This tree exists because $p \in \support{\mu}$, and so $p \in \support{\nu}$. The addition of the copies of $\beta$ maintains the ``no smoke without fuel'' property, because it was already fulfilled in $\dec{H}$ by the nodes of $\beta$. The smoke nodes of $\dec{H}'(1)$ thus constructed are labelled by $\mu$, and fuel and cargo nodes are neither added nor removed. The proof that $\dec{H}'$ is realizable can be found in the Appendix.
\end{proof}

\begin{definition}
Given a level $\dec{H}(i)$ of a decorated history, define its \emph{footprint} as the triple
$(\dec{H}_c(i), \dec{H}_f(i), \support{\dec{H}_s(i)})$ (that is, we only take the support of
$\dec{H}_s(i)$, not $\dec{H}_s(i)$ itself).
\end{definition}

\begin{lemma}[Unique Footprint Lemma]
\label{lm:unique-footprint}
Every realizable history has an equivalent fuel-efficient decorated history in which every level has a different fooprint.
\end{lemma}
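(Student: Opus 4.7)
The plan is to start with a realizable history $H$ and transform it in two phases. First, by applying the Replacement Lemma exhaustively (as remarked just after it) we obtain an equivalent realizable fuel-efficient decorated history $\dec{H}$ of length $h$. Second, we iteratively shorten $\dec{H}$ as long as some two of its levels share the same footprint, preserving equivalence, realizability and fuel-efficiency. Since each shortening strictly decreases the length, the process terminates at a decorated history in which every two levels have distinct footprints.

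For the shortening step, suppose levels $i < j$ of $\dec{H}$ satisfy $\dec{H}_c(i) = \dec{H}_c(j)$, $\dec{H}_f(i) = \dec{H}_f(j)$, and $\support{\dec{H}_s(i)} = \support{\dec{H}_s(j)}$. Let $\dec{H}_{\ge j}$ denote the decorated history of length $h - j + 1$ formed by the subtrees of $\dec{H}$ rooted at the nodes of $\dec{H}(j)$, with decoration inherited from $\dec{H}$; it is realizable by $t_j^{k_j} \cdots t_{h-1}^{k_{h-1}}$. Apply the Smoke Irrelevance Lemma to $\dec{H}_{\ge j}$ with $\mu \defeq \dec{H}_s(i)$; the hypothesis $\support{\mu} \subseteq \support{\dec{H}_s(j)}$ holds because the footprints at $i$ and $j$ agree. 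This yields a realizable decorated history $\dec{H}'_{\ge j}$ of the same length whose cargo and fuel agree level-by-level with $\dec{H}_{\ge j}$ and whose initial smoke multiset is $\dec{H}_s(i)$. Its initial marking is therefore $\dec{H}_c(i) + \dec{H}_f(i) + \dec{H}_s(i) = M_H^i$.

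We glue the prefix of $\dec{H}$ on levels $1, \ldots, i$ to $\dec{H}'_{\ge j}$ as follows. Since the cargo, fuel, and smoke multisets of $\dec{H}(i)$ now agree componentwise with those of the root level of $\dec{H}'_{\ge j}$, we fix a label- and partition-preserving bijection $\phi$ between them and attach to each node $v$ at level $i$ of $\dec{H}$ the subtree of $\dec{H}'_{\ge j}$ rooted at $\phi(v)$. Call the resulting decorated forest $\dec{H}''$. Its initial and final markings are $M_H^1$ and $M_H^h$, and it is realized by the concatenation of the two firing sequences, which fires correctly because the marking at the seam is $M_H^i$ on both sides; hence $\dec{H}''$ is equivalent to $\dec{H}$ and strictly shorter. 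The inherited decoration is valid: ``smoke descendants are smoke'' and ``no smoke without fuel'' are preserved since gluing only identifies nodes of matching type, while the cargo condition relies on the key observation that in any decorated history fuel and smoke nodes have no cargo descendants. Fuel-efficiency is preserved because every fuel multiset of $\dec{H}''$ already appears in $\dec{H}$.

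The main obstacle is verifying that the inherited decoration on the glued forest is a valid decoration, in particular the cargo-equals-reaches-final-level property at levels $\le i$: a cargo ancestor reaches the final level only through a cargo descendant at level $i$, which is matched by $\phi$ to a cargo root of $\dec{H}'_{\ge j}$ and thus still reaches the final level, whereas a fuel or smoke ancestor cannot acquire a cargo descendant across the seam because its matched root is non-cargo in $\dec{H}'_{\ge j}$ and non-cargo nodes have only non-cargo descendants.
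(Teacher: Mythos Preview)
Your proof is correct and follows essentially the same approach as the paper: obtain a fuel-efficient history via the Replacement Lemma, then for any two levels with equal footprint apply the Smoke Irrelevance Lemma to the suffix and splice it onto the prefix to get a strictly shorter equivalent fuel-efficient history. The only cosmetic difference is that the paper phrases the second phase as a minimality-plus-contradiction argument rather than an explicit iteration, and it is less careful than you are about verifying that the glued decoration is valid (in particular the cargo condition across the seam), which you handle correctly.
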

\begin{proof}
Let $\dec{H}$ be a realizable decorated history. By the Replacement Lemma, we can assume 
w.l.o.g. that $\dec{H}$ is fuel-efficient. Assume further that $\dec{H}$ has minimal length $h$, i.e., every
equivalent decorated history that is also fuel-efficient has length at least $h$. We claim that
every level of $\dec{H}$ has a different footprint. 
Assume this is not the case. Then there exist two indices $1 \le i < j \le h$ such that 
$(\dec{H}_c(i), \dec{H}_f(i), \support{\dec{H}_s(i)}) = (\dec{H}_c(j), \dec{H}_f(j), \support{\dec{H}_s(j)})$. 
The truncated history $\dec{H}(j)\dec{H}(j+1) \ldots \dec{H}(h)$ is clearly realizable. Since 
$\support{\dec{H}_s(i)} = \support{\dec{H}_s(j)}$, we can apply the Smoke Irrelevance Lemma with
$\mu:= \dec{H}_s(i)$ and obtain a decorated history $\dec{H'}$ 
of length $h-j+1$
such that $(\dec{H}_c(i), \dec{H}_f(i), \dec{H}_s(i)) = (\dec{H'}_c(1), \dec{H'}_f(1), \dec{H'}_s(1))$
(notice: now $\dec{H}_s(i)=\dec{H'}_s(1)$, instead of only $\support{\dec{H}_s(i)}=\support{\dec{H'}_s(1)}$ ).
But this implies $\dec{H}(i) = \dec{H'}(1)$, and so the concatenation $H(1) \cdots H(i-1) H'(1) \cdots H'(h-j+1)$ is also
a realizable history. By the Smoke Irrelevance Lemma we have $\dec{H'}_c(h-j+1) = \dec{H}_c(h)$. Since the last 
levels of a decorated history only contain cargo nodes, this implies $\dec{H'}(h-j+1) = \dec{H}(h)$, and so 
the concatenation is equivalent to $H$. Further, since $\dec{H'}$ has the same cargo and fuel nodes as 
$\dec{H}(j)\dec{H}(j+1) \ldots \dec{H}(h)$, the concatenation is also fuel-efficient, contradicting that
$\dec{H}$ has minimal length. 
\end{proof}

We are equipped to prove the Shortening Theorem.

\BimoIntermediateValues*
\begin{proof} 
We first prove the bound on the accelerated length.
By the Unique Footprint Lemma, there is a history $H$ such that $\sourceMarking \trans{H} \targetMarking$
and $H$ has a decoration $\dec{H}$ where every level has a different footprint. So the length of 
$\dec{H}$ is bounded by the number of possible footprints of the histories leading from
$\sourceMarking$ to $\targetMarking$. Since, by definition, the number of cargo nodes cannot decrease from a level to the next, and the last level consists of only cargo, every level has between $0$ and $m$ cargo nodes per place. Since $\dec{H}$ is fuel-efficient, every level has between $0$ and $n$ fuel nodes per place. Finally, there are at most $2^n$ possible supports in a net with $n$ places. So the number of footprints, and so the length of $\dec{H}$, and the accelerated length of any firing sequence realizing $\dec{H}$, is at most $2^n(m+1)^n(n+1)^n$.

Let us now prove the token bound. 
To bound the number of smoke nodes in each level, we apply the following operation. 
Replace every largest tree of smoke nodes (since the children of smoke nodes are smoke, this means trees rooted at smoke nodes whose parents are cargo or fuel) by the tree $\beta_{s}$ defined as in the Replacement Lemma: 
$\beta_{s}$ is a path of smoke $p$-nodes ending at level $\mathit{last}(p)$, appended by a $p$-destroyer tree. 
This maintains realizability, because (by the ``no smoke without fuel'' property in $\dec{H}$),  it does not decrease the support of the multiset of places of any level. 
We call $\dec{H'}$ the resulting realizable history with decorated nodes.
Note that the ``no smoke without fuel'' property may not hold in $\dec{H'}$,  so it is not formally a decorated history, but it is sufficient to conclude the proof.
$\dec{H'}$ has the following property: smoke $p$-nodes can only create other nodes (which, by definition, are also smoke) at the level $\mathit{last}(p)$, and it can create at most $m_d$ of them. 

At all other levels $j$ of $\dec{H}'$, only cargo and fuel nodes can create nodes.
There are at most $h' \leq 2^n(m+1)^n(n+1)^n$ levels, and at most $(m+n)$ cargo and fuel nodes per place.
Each transition has a unique source place,
and
all the nodes are added to the initial $m'$ nodes corresponding to the tokens of $\sourceMarking$.
Thus there are at most $m'+ h' (m+n) m_d$ nodes at the first level $last(p)$ in which a smoke node creates nodes.
At most all of the nodes are smoke, 
so at most $(m'+ h' (m+n) m_d)m_d$ nodes are created.
There are at most $n$ levels $\mathit{last}(p)$, 
which each create at most the total amount of nodes times $m_d$ nodes.
Thus at every level of the history there are at most $(m' + h' (m+n) m_d )m_d^{n}$ nodes,
concluding the proof.

\end{proof}

\section{Many-to-many reachability and coverability.}
\label{sec:param-reach-BIO}
In~\cite{conf/apn/EsparzaRW19} we prove that many-to-many versions of the reachability and coverability problems for IO nets are $\PSPACE$-complete. 
We extend this result to \bio{} nets, which requires to use not only the Shortening Theorem itself, but also the lemmas 
conducting to its proof.  

We recall some definitions of \cite{conf/apn/EsparzaRW19}.
A set $\cube$ of markings of a net $\net=(P,T,F)$  is a \emph{cube} if there exist mappings 
$L \colon P \rightarrow \N$ and $U \colon P \rightarrow \N \cup \infty$ such that $M \in \cube$ if and only if $L\leq M \leq U$. Abusing language,
we identify $\cube$ with the pair $(L, U)$. Observe that cubes can be infinite sets
of markings.  The \emph{cube-reachability (coverability)} consists of deciding, given a net $\net$ and cubes $\cube,\cube'$ of $\N$, whether there exist markings 
$M \in \cube$ and $M' \in \cube'$ such that $M$ is reachable (coverable) from $M'$.

\begin{restatable}{theorem}{TheoremBioCube}
The cube-reachability and cube-coverability problems for \bio{} nets are \PSPACE-complete.
\end{restatable}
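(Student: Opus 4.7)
The plan is to derive PSPACE-hardness from~\cite{conf/apn/EsparzaRW19} (cube-reachability and cube-coverability are already PSPACE-hard for IO nets, a subclass of BIO nets), and to prove PSPACE membership via Savitch's theorem by exhibiting an NPSPACE algorithm. A standard reduction, analogous to the one in the proof of Theorem~\ref{thm:simple-BIO-reachability} but with each destroying transition upgraded into a proper BIO transition by attaching an always-marked observer place, reduces cube-coverability to cube-reachability, so we focus on the latter.

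The NPSPACE algorithm nondeterministically guesses a source marking $M' \in \cube'$, a target marking $M \in \cube$, and an accelerated firing sequence $\sigma$ with $M' \trans{\sigma} M$, simulating $\sigma$ one transition block at a time. By the BIO Shortening Theorem, it suffices to consider $\sigma$ of accelerated length at most $K = 2^n(|M|+1)^n(n+1)^n$ and intermediate markings of size at most $(|M'| + K(|M|+n)m_d)m_d^n$. The logarithms of all these quantities, together with the block exponents, fit in polynomial bits provided $\log|M|$ and $\log|M'|$ are themselves polynomial in the input size (the input already encodes the bounds $L, U, L', U'$ in binary, with $\infty$ as a special symbol).

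The main obstacle is therefore to establish a small-witness property: if cube-reachability holds, then there exist witnesses $M \in \cube$ and $M' \in \cube'$ with $\log|M|$ and $\log|M'|$ polynomial in the input. For $M'$, I will use the fuel-efficient history construction together with the Smoke Irrelevance Lemma. In a fuel-efficient decorated history, fuel at level~$1$ is capped by $n$ tokens per place, and the initial cargo is bounded by $|M|$ because cargo multiplicities are nondecreasing along the history (every cargo node has a descendant at the final level, which contributes to $M$). The Smoke Irrelevance Lemma then lets us shrink the smoke at level~$1$ to a single token per place in its support, subject only to the constraint $M' \geq L'$; thus $M'$ can be chosen as the componentwise maximum of $L'$ with a marking of polynomial bit-size in $|M|$ and $n$.

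For $M$, the strategy is to extend the Replacement-style pruning of the Shortening Theorem to the cargo layer at the final level. If $M(p)$ exceeds $L(p)$ by more than a polynomial threshold in $n$, then among the cargo subtrees of $H$ ending in a $p$-node at the final level, sufficiently many must share an identical footprint profile along their paths; this redundancy lets us discard excess subtrees while preserving both realizability of the surrounding history and the constraint $M \in \cube$. Applying this pruning place by place bounds $|M|$ by $|L|$ plus a polynomial in $n$, so $\log|M|$ is polynomial in the input. Combined with the bound on $|M'|$, every component of the guessed certificate lives in polynomial space, which yields the NPSPACE upper bound and hence the PSPACE-completeness of both cube problems.
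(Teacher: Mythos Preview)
Your hardness argument, the reduction of cube-coverability to cube-reachability, and your bound on the source marking $M'$ via fuel-efficiency plus the Smoke Irrelevance Lemma are all correct and essentially match the paper's proof (the paper states $m' \le m + n^2 + \max(|L'|,n)$ for the same reasons you give).

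The gap is in your bound on $|M|$. Your proposal to ``discard excess cargo subtrees'' is not well-defined and does not work as stated. Cargo is not a free parameter of a decoration: a node is cargo \emph{if and only if} it has a descendant at the final level, so you cannot simply re-declare cargo as fuel or smoke. If instead you literally delete a subtree rooted at some internal node $v$, you break realizability: $v$'s parent was either passive (and must then have exactly one child with the same label) or active (and must then have exactly the multiset of destination places of $t_j$ as children), and deleting $v$ violates one of these. If you delete an entire branching trajectory rooted at level~$1$, you change $M'$, and in a BIO net that trajectory may contribute final tokens to several places at once, so you do not control $M(p)$ place by place. Finally, your pigeonhole on ``footprint profiles along paths'' cannot give a polynomial threshold: those paths have length $h$, and $h$ itself depends on $|M|$, so the number of distinct profiles is exponential in the quantity you are trying to bound.

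The paper avoids all of this with a much simpler device you already used for coverability: for every place $p$ with $U(p)=\infty$, add a destroying transition $\tau_p$ (made into a proper BIO transition by a dummy observer, as you note). In the modified net, any run reaching some $M\in\cube$ can be extended by destroying surplus tokens in the unbounded coordinates, so one may guess $M$ with $M(p)=L(p)$ whenever $U(p)=\infty$ and $L(p)\le M(p)\le U(p)$ otherwise. This yields $|M|$ bounded by $|L|$ plus the finite upper bounds, hence polynomial in the input, and then your argument for $M'$ and the Shortening Theorem finish the proof exactly as you outline.
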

\begin{proof}
\PSPACE-hardness follows from \PSPACE-hardness for IO nets. 
We show that the problems are in \NPSPACE\ and apply Savitch's theorem. 
Cube-coverability from $\cube'$ to $\cube=(L,U)$ reduces to cube-reachability from $\cube'$ to the cube $(L,U'')$ such that $U''(p)=\infty$ for all $p$, so it
suffices to consider cube-reachability from $\cube'=(L',U')$ to $\cube=(L,U)$. 
%
For each place $p$ with upper bound $U(p)=\infty$ in $\cube$, add a ``destroying transition'' $\tau_p$ to $N$
with preset $\preset{\tau_p}=\set{p}$ and postset $\postset{\tau_p}=\emptyset$. 
We guess a marking $M$ of size $m$ satisfying $M(p)=L(p)$ if $U(p)=\infty$, and $L(p) \leq M(p) \leq U(p)$ if $U(p) < \infty$. 
This reduces the problem to checking if $M$ is reachable
in the modified net
from some marking of $\cube'$.
%
%
%
By Lemma \ref{lm:blue-irrelevance}, only the footprint of a marking matters for knowing whether it can reach marking $\targetMarking$.
We pick $\sourceMarking$ in $\cube'$ of size $m' \le m+n^2+\max(|L'|,n)$.
 The summands correspond to the cargo, fuel  
and smoke nodes of the initial marking
 of a fuel-efficient decorated history given by the Replacement Lemma if $\sourceMarking \trans{*} \targetMarking$ holds,
 where $\max(|L'|,n)$ is enough smoke nodes so that $\sourceMarking \in \cube'$ and any set of places is covered.
By Theorem \ref{thm:simple-BIO-reachability}, $\sourceMarking \trans{*} \targetMarking$ can be checked in \PSPACE. 
\end{proof}


\section{Conclusion}
\label{sec:conclusion}
We have shown that immediate observation Petri nets are globally flat,
allowing the use of existing efficient verification tools.
We have also studied \bioLong{} nets, which are simultaneously a generalisation
of IO nets, and of the Basic Parallel Processes model.
The class of \bio{} nets 
significantly extends the expressive power 
of both IO nets and BPP nets,
bringing together process creation and (restricted) cross-process interaction
via a simple and natural definition.
While such an extension does not preserve global flatness,
we have proven that local flatness is still preserved,
and many-to-many reachability and coverability problems are still in \PSPACE{}.

As \bio{} nets combine \PSPACE{}-verifiable reachability and non-semilinear
reachability relation, the further study of the structure of this reachability 
relation seems of interest. For instance, we plan to obtain the bounds on the size
of the pre- and post- image of a marking, provided that these images are finite.
It is also worth noting that the results of this paper still hold
(up to a slight alteration of the Shortening Theorem bounds)
if we define BIO transitions via the constraint $|\preset{t}-\postset{t}| \le 1$.
This is equivalent to extending BIO transitions with the possibility of 
multiple observations and the absence of a source place.




\bibliography{references}

\appendix
\section{Shortening Theorem for IO nets}

\BoostingLemma*
\begin{proof} (Sketch.) Let $h$ be the length of $H$, and
let $t_1^{k_1} \cdots t_{h-1}^{k_{h-1}}$ be a realization of $H$. 
For every $1 \leq i \leq h-1$ define $k_i'$ as follows: if $\tau(i) = \tau(i+1)$, then $k_i' \defeq k_i$;
if $\tau(i) \neq \tau(i+1)$, then $k_i' \defeq k_i +1$. 
We claim that 
$t_1^{k_1'} \cdots t_{h-1}^{k_{h-1}'}$ is a realization of  $H+\tau$. 
The proof is by induction on $h$.

Assume $h=1$.  Then $H$ is realizable by $t^0$ for any transition $t$, and so is $H+\tau$. 

Assume that the induction property holds for some $h \ge 1$, and let $H$ be of length $h+1$, realizable by $t_1^{k_1} \cdots t_{h}^{k_{h}}$.
By induction, the history $H+\tau$ truncated of its last step is realizable by $t_1^{k_1'} \cdots t_{h-1}^{k_{h-1}'}$.
If $\tau(h) \ne \tau(h+1)$ in $H$, then since $\tau \in H$ and $H$ is realizable, 
$\tau(h)\tau(h+1)=p_s p_d$ for $p_s$ and $p_d$ the source and destination places of $t_{h}$.
Additionally, 
there are $k_{h}-1$ other trajectories $\tau'$ such that $\tau'(h)\tau'(h+1)=p_s p_d$, and there is at least one trajectory $\tau'$ such that $\tau'(h)\tau'(h+1)=p_o p_o$.
Thus $t_1^{k_1'} \cdots t_{h-1}^{k_{h-1}'} t_h^{k_h+1}$ realizes $H+\tau$.
If $\tau(h)=\tau(h+1)$ in $H$, then $H+\tau$ 
is realized by $t_1^{k_1'} \cdots t_{h-1}^{k_{h-1}'} t_h^{k_h}$. 
\end{proof}

\ThmShortIO*
\begin{proof}
Let $H$ be a realizable history such that $\sourceMarking \trans{H} \targetMarking$, and let $h$ be the length of $H$.
For every two places $p, q$, let $B_{p,q}$ denote the bunch of all trajectories of $H$ leading from $p$ to $q$,
and let $T_{p,q} = \size{B_{p,q}}$.  Applying the Pruning Lemma to all bunches $B_{p,q}$ such that $T_{p,q} \geq n$, 
we obtain a new realizable history $\widetilde{H}$ satisfying
\begin{equation}
\label{eqT}
\widetilde{T}_{p,q} = \min\{ n, T_{p,q} \} \quad \mbox{ for every $p, q \in P$.}
\end{equation}
So $\widetilde{H}$ has $\sum_{p,q \in P} \widetilde{T}_{p,q} \leq n^3$ trajectories. 
Let $M_{\widetilde{H}}^1 \trans{t_1^{k_1}} M_{\widetilde{H}}^2 \cdots  M_{\widetilde{H}}^{h-1} \trans{t_{h-1}^{k_{h-1}}} M_{\widetilde{H}}^h$ be a realization of $\widetilde{H}$. Since $\widetilde{H}$ hast at most $n^3$ trajectories,
we have $M_{\widetilde{H}}^i(p) \leq n^3$ for every $p \in P$ and $1 \leq i \leq n$. If $h \geq (n^3 +1)^n$,
then there are $1 \leq i \neq j \leq h$ such that $M_{\widetilde{H}}^i = M_{\widetilde{H}}^j$, and 
the history $\widetilde{H'}$ obtained by ``cutting out'' the fragment of $\widetilde{H}$ between 
$M_{\widetilde{H}}^i$ and $M_{\widetilde{H}}^j$ is also realizable. (Formally, $\widetilde{H'}$ 
is the result of replacing every trajectory $\tau \in \widetilde{H}$ by $\tau(1) \cdots \tau(i)\tau(j+1) \cdots \tau(h)$.)
So w.l.o.g. we can assume $\widetilde{h} <  (n^3 +1)^n$.

Since $\widetilde{H}$ is realizable, we have $\widetilde{\sourceMarking} \trans{\widetilde{H}} \widetilde{\targetMarking}$
for some markings $\widetilde{\sourceMarking}, \widetilde{\targetMarking}$. We examine the relation between 
$\sourceMarking$ and $\widetilde{\sourceMarking}$, and between $\targetMarking$ and $\widetilde{\targetMarking}$.
For every place $p$, the initial (final) number of tokens of $p$ in $H$ is equal to the number of trajectories
of $H$ of  starting in $p$ (ending in $p$), and similarly for $\widetilde{H}$. So we have  
$$\begin{array}{rcl}
\sourceMarking(p) = \sum_{q \in P} T_{p,q} & \mbox{ and } & \targetMarking(p) = \sum_{q \in P} T_{q,p} \\
\widetilde{\sourceMarking}(p) = \sum_{q \in P} \widetilde{T}_{p,q} & \mbox{ and } &
\widetilde{\targetMarking}(p) = \sum_{q \in P} \widetilde{T}_{q,p}
\ . 
\end{array}$$ 
\noindent Further, for every place $p \in P$:
\begin{itemize}
\item[(a)] $\widetilde{\sourceMarking}(p) \leq \sourceMarking(p)$, and $\widetilde{\targetMarking}(p) \leq \targetMarking(p)$. \\
Follows immediately from  $\widetilde{T}_{p,q} \leq T_{p,q}$ for every $q \in P$ (Equation \ref{eqT}).
\item[(b)] If $\widetilde{\sourceMarking}(p) = 0$ then $\sourceMarking(p) = 0$, and  if $\widetilde{\targetMarking}(p) = 0$ then 
$\targetMarking(p) = 0$. \\
If $\widetilde{\sourceMarking}(p) = 0$ then $\widetilde{T}_{p,q}=0$ for every $q \in P$. So,  by Equation \ref{eqT},
$\widetilde{T}_{p,q}=T_{p,q}$ for every $q \in P$, and so $\sourceMarking(p) = \sum_{q \in P} T_{p,q} =
\sum_{q \in P} \widetilde{T}_{p,q} = \widetilde{\sourceMarking}(p) = 0$. The proof for the target markings is analogous.
\end{itemize}

Let $\overline{H}$ be the history obtained from $\widetilde{H}$ as follows: For every $p, q \in P$, if $\widetilde{T}_{p,q}>0$
then pick a trajectory $\tau \in B_{p,q}$, and set $\overline{B}_{p,q} = \widetilde{B}_{p,q} + (\widetilde{T}_{p,q} - T_{p,q} -1) \cdot \tau$
By the Boosting Lemma, $\overline{H}$ is realizable, and so there are markings $\overline{\sourceMarking}, \overline{\targetMarking}$ such that
$\overline{\sourceMarking} \trans{\overline{H}} \overline{\targetMarking}$. Further, by (a) and (b) above we have 
$\overline{T}_{p,q} = T_{p,q}$ for every $p, q \in P$, and so for every $p \in P$:
$$\overline{\sourceMarking}(p) = \sum_{q \in P} \overline{T}_{p,q} = \sum_{q \in P} T_{p,q} = \sourceMarking(p)$$
So we get $\sourceMarking \trans{\overline{H}} \targetMarking$. Since $\widetilde{H}$ and $\overline{H}$ have the same length, 
we get $\overline{h} <  (n^3 +1)^n$. So every firing sequence realizing $\overline{H}$ has accelerated length at most 
$(n^3 +1)^n$, and we are done.
\end{proof}

\section{Shortening Theorem for BIO nets}
%
%
%

We give ourselves a few more definitions to help in the proofs.
We call smoke and fuel nodes \emph{transportation} nodes.
Given a decorated history $\dec{H}$, 
let $\mathit{last}(p)$ denote the last level $i$ such that $\dec{H}(i)$ contains a transportation $p$-node.
A  \emph{place-level} is a pair $(q, j)$, where $q$ is a place and $j$ is a level of $H$.
A \emph{path} of place-levels is a concatenation of ``steps'' of two types:  ``doing nothing steps'' from $(r,l)$ to $(r,l+1)$, 
and ``transportation steps'' from $(r,l)$ to $(s,l+1)$ such that some transportation $r$-node of $\dec{H}(l)$ that has an $s$-child in $\dec{H}(l+1)$. 
We say that $(q, j)$ is \emph{reachable} from $(p,i)$ if there is a path from $(p,i)$ to $(q,j)$, and let $\bunchreach$ be the set of all place-levels $(q,j)$ reachable from $(p,i)$.


\BIOPruningLm*
\begin{proof}
We first construct $H'\defeq H[B_p'/B_p(i)]$ and show that it is realizable and equivalent to $H$.
Then, we define a decoration $\dec{H'}$ of $H'$, and show that it realizes the condition of the lemma.

 \textbf{Construction of $H'$.}
We define $B_p'$ as the union of three sets of branching trajectories, $B_c$, $B_f$, and $B_s$ (where $c, f, s$ stand for cargo, fuel, and smoke):
\begin{itemize}
\item $B_c$ contains all branching trajectories of $B_p(i)$ rooted at a cargo node. 
\item $B_f$ contains a branching trajectory $\beta_q$ for every $q \in \bunchreach$. 

We define $\beta_q$.
Let $\mathit{first}(q)$ be the smallest $j$ such that $(q, j) \in \bunchreach$. 
Notice that $\mathit{first}(q) \le \mathit{last}(q)$ for all $q \in P$, since by definition of reachability there exists a transportation $q$-node in level $\mathit{first}(q)$.
There is a shortest path from $(p,i)$ to $(q,\mathit{first}(q))$, and
each step of the path corresponds to doing nothing or to executing a transition once. 
Let $\delta_{q}$ be the corresponding branching trajectory.
First we append a path to each leaf of $\delta_q$: If the leaf is, say, an $r$-node at level $j$, then we append to it a path of $r$-nodes from level $j$ to level $\mathit{last}(r)$. 
Then, we append to 
the end of each path a \emph{destroyer}, i.e., a tree that makes the token disappear. 
We choose for this any subtree $\gamma_r$ of $\dec{H}$ rooted in a transportation node of $(r, \mathit{last}(r)).$
\item $B_s$ contains $|B_p(i)| - |B_c| - |B_f|$ copies of a tree $\beta_{s}$, consisting of a path of $p$-nodes, leading from level $i$ to level $\mathit{last}(p)$, appended with a destroyer $\gamma_p$. 
\end{itemize}

We define the replacement $H' = H[B_p'/B_p(i)]$:
we replace the trees of $B_p(i)$ with a cargo root in $\dec{H}$ by the same tree in $B_c$,
we replace some trees of $B_p(i)$ with a fuel root in $\dec{H}$ by the trees of $B_f$ (in any order),
and the rest  of the trees of $B_p(i)$ by the trees of $B_s$. 
This is well-defined because the 
the trees of $B'_p$ all have $p$-nodes as root, 
there are no more than $n$ trees in $B_f$ and more than $n$ trees with fuel roots in $B_p(i)$ since $p$ is wasteful at $i$,
and there are as many trees overall in $B'_p$ as in $B_p(i)$.

\medbreak \textbf{History $H'$ is realizable and equivalent.}
History $H'$ is equivalent to history $H$: 
the trees added in $B'_p \setminus B_c$ all end in destroyers,
and the other trees of $H'$ were already in $H$,
so $H'$ has the same final marking.
In case $i=1$, 
the number of $p$-nodes in $H(i)$ and $H'(i)$ is the same so $H'$ has the same initial marking.

History $H$ is realizable, and we note $t_1^{k_1} \cdots t_{h-1}^{k_{h-1}}$ a sequence that realizes it, for some transitions $t_1, \ldots, t_{h-1} \in T$ and numbers $k_1, \ldots, k_{h-1} \geq 0$.
We show that $H'$ is realizable using the same transitions but different numbers $l_1, \ldots, l_{h-1} \geq 0$.
Let $1 \le j \le h-1$. 
Let $H'_p(i)$ be the set of nodes of $H'(j)$ which have exactly one child with the same label, and let $H'_a(j)$ be the rest.
We claim that for every node $v'$ in $H'_a(j)$ with label $r$ and multiset of children labels $c$, there exists a node $v$ in $H_a(j)$ with label $r$ and multiset of children labels $c$.
By realizability of $H$ this entails that $v'$ is labeled with the source place $p_s$ of $t_j$, and the multiset of labels of its children is the multiset $\multiset{p_{d_1}, \ldots, p_{d_k}}$ of destinations of $t_j$.

Now to show our claim.
Let $v'$ a node of $H'_a(j)$. 
If $v'$ is not a node of the subtree $B'_p$, or if $v'$ is a node of $B_c$, then we are done.
Let us assume this is not the case, \ie $v' \in B_f \cup B_s$.
\begin{itemize}
\item
If $v'$ is in a tree $\beta_{s}$, then it is in a a destroyer (since $v'$ is not in $H'_p(j)$) and so it is in a copy of a subtree of $\dec{H}$ .
\item
Assume $v'$ is in a tree $\beta_q$ for some $q \in \bunchreach$.
If $v'$ is in a destroyer then it is in a copy of a subtree of $\dec{H}$, we are done.
Otherwise, $v'$ is in the tree $\delta_{q}$ induced by the shortest path $\shortpath_q$ from $(p,i)$ to $(q,\mathit{first}(q))$ in $H$.
Since $v'$ is not passive, \ie $v' \notin H'_p(j)$, and by definition of how a path induces a tree,
there is an $r$-node $v$ of $H(j)$ with the same children as $v'$.
\end{itemize}

We now show that the set $H'_p(j)$ contains a node labeled by the place $p_o$ observed by $t_j$.
If there is a node labeled $p_o$ in $H_p(j)$ that is not in $B_p(i)$, then it is also in $H'_p(j)$ and we are done.
Let us assume that the only nodes of $H_p(j)$ labeled $p_o$ are in $B_p(i)$.
If there is a cargo node labeled $p_o$ in $\dec{H}_p(j)$ then it is also in $H'_p(j)$ so we are done.
Otherwise there exists a transportation node $v$ labeled $p_o$ in $\dec{H}_p(j)$, and $j \le \mathit{last}(p_o)$ by definition.
Since $v$ is in $B_p(i)$, either $v$ is in a tree with a cargo root, or place-level $(p_o,j)$ is reachable from $(p,i)$.
If $v$ is in a tree of $B_p(i)$ with a cargo root, it is also in $B_c\subseteq B'_p$.
Otherwise $(p_o,j)\in \bunchreach$, and therefore by construction there is a node in $B'_p$ labeled $p_o$ at every level between $\mathit{first}(p_o)$ and $\mathit{last}(p_o)$, in particular at $j$.

\medbreak \textbf{Decoration of $H'$.}
Let $\dec{H'}$ be the following decoration of $H'$.

We start with the nodes of $B_f$ and $B_s$.
In each tree $\beta_q$ in $B_f$, 
constructed around the tree induced by a shortest path $\shortpath_q$ from $(p,i)$ to $(q, \mathit{first}(q))$,
we let the nodes along the path $\shortpath_q$ be fuel nodes, 
along with the nodes along one branch from $(q, \mathit{first}(q))$ to $(q, \mathit{last}(q))$.
All the other nodes of $\beta_q$ are defined as smoke nodes.
We let all the nodes of the trees $\beta_{s}$ be smoke nodes.

The rest of the nodes of $H'$ are decorated in two steps.
First, we set $\dec{H'}$ to be equal to $\dec{H}$ on the nodes of $H' \setminus (B_f \cup B_s)$, 
which is possible because $H' \setminus B'_p = H \setminus B_p(i)$ and the trajectories of $B_c$ are trajectories of $B_p(i)$.
Then, we do the following ``re-decoration''.
Let $(q,j)$ be a place level reachable from $(p,i)$ in $H'$.
If there are any fuel nodes labeled $q$ in $(H' \setminus B_f)(j)$, redecorate them and all their descendants as smoke nodes in $\dec{H'}$.
Do this for every $(q,j)$ reachable from $(p,i)$.

The order of ``cargo then fuel then smoke'' is respected along the branching trajectories of $\dec{H'}$
because
they are respected in $B'_p$, 
and there are no more than $n$ trees with a fuel root in $B'_p$ while there are more than $n$ in $B_p(i)$.
The cargo nodes in $\dec{H'}$ are well defined, as the cargo nodes of $\dec{H'}$ are the cargo nodes of $\dec{H}$. 

The smoke/fuel partition of $\dec{H'}$ is well defined: 
First, remark that the last level index $\mathit{last}(p)$ at which there is a transportation $p$-node in $\dec{H'}$ is equal to $\mathit{last}(p)$ in $\dec{H}$, for any place $p$ by construction. 
Let $v$ be a smoke $q$-node at level $\dec{H'}(j)$, for some $q$ and $j$.
We check that the ``no smoke without fuel'' condition is fulfilled.
If $(q,j)$ is reachable from $(p,i)$ in $H$ then there exists a fuel $q$-node in $\dec{H'}(j)$ provided by $\beta_q$, since $j\le \mathit{last}(q)$ by virtue of $v$ being smoke. 
If $(q,j)$ is not reachable from $(p,i)$ in $H$,
then there is no subtree of $B_p(i)$ rooted in a transportation $p$-node with a descendant labeled $q$.
Therefore in $\dec{H'}$, node $v$ is not in $B_f$. 
Since it is also not in $B_s$, whose trees are only $p$ nodes until $\mathit{last}(p)$, $v$ is in either a tree of $B_c$ or in no tree of $B'_p$,
and therefore $v$ exists also in $\dec{H}$ as a smoke node.
Since the smoke/fuel partition of $\dec{H}$ is well defined, there exists a fuel $q$-node $v'$ in $\dec{H}(j)$.
Since $(q,j)$ is not reachable from $(p,i)$ in $H$, $v'$ is either in $B_c$ or not part of $B_p(i)$ and so $v'$ is also in $\dec{H'}(j)$.

For every place-level $(q,j)$ in $H'$ reachable from $(p,i)$, there are at most $n$ fuel $q$-nodes in $\dec{H'}(j)$.
Indeed, by definition, the only fuel nodes labeled $q$ in $\dec{H'}(j)$ are in $B_f(j)$.
By definition of $B_f(j)$, the only fuel nodes labeled $q$ in $B_f(j)$ are in the trees $\beta_r$ for some $r \in \bunchreach$.
There are a most $n$ such trees, and in each tree there is at most one fuel node per level.

Therefore there are no wasteful places $q$ at some level $j$ such that 
$(q,j)$ is reachable from $(p,i)$ in $H'$.
In particular, $p$ is fuel-efficient since $i$ is the earliest level at which $p$ is wasteful in $H$. 
If there is a wasteful place-level in $\dec{H'}$, then it is unreachable from $(p,i)$ in $H'$.
By definition of $H'$, this means that it is also a wasteful place-level in $H \setminus B'_p$ and thus in $H$. 
Thus the fuel-efficient places of $\dec{H'}$ contain all the fuel-efficient places of $\dec{H}$, as well as the place $p$.
\end{proof}


We remind the reader that $\dec{H}_c(i)$, $\dec{H}_f(i)$, $\dec{H}_s(i)$ denote the multiset of cargo, fuel, and smoke
nodes of $\dec{H}(i)$. 

\SmokeIrrelevance*
%
\begin{proof}
Rename $\nu:=\dec{H}_s(1)$ for clarity. 
To construct $\dec{{H'}}$,
start with $\dec{H}$, and do the following for every place $p \in P$. 
If $\mu(p) \le \nu(p)$, then delete $\nu(p) - \mu(p)$ smoke $p$-nodes from $\dec{H}(1)$ as well as all their descendants (which are all smoke nodes by definition). 
If $\mu(p) > \nu(p)$, then add to $\dec{H}$ $(\mu(p) - \nu(p))$ copies of an arbitrary tree $\beta$ of smoke nodes of $\dec{H}$ rooted in $(p,1)$. 
This tree exists because $p \in \support{\mu}$, and so $p \in \support{\nu}$. 
The addition of the copies of $\beta$ maintains the ``no smoke without fuel'' property, because it was already fulfilled in $\dec{H}$ by the nodes of $\beta$.

The smoke nodes of $\dec{H}'(1)$ thus constructed are labelled by $\mu$, and fuel and cargo nodes are neither added nor removed.
We prove that $\dec{H}'$ is realizable.
Let $t_1^{k_1} \cdots t_{h-1}^{k_{h-1}}$ be a sequence that realizes $\dec{H}$, for some transitions $t_1, \ldots, t_{h-1} \in T$ and numbers $k_1, \ldots, k_{h-1} \geq 0$.

Removing trees of smoke nodes from $\dec{H}'$ does not affect realizability:
if there is a smoke $p_o$-node labeled by the observed place of $t_i$ in some level $\dec{H}(i)$ with a child labeled the same in $\dec{H}(i+1)$, then there is also a  pair of such fuel $p_o$-node in $\dec{H}(i)$ and $\dec{H}(i+1)$ by property of smoke nodes.
This pair of fuel nodes is still in $\dec{H}'$ because we only remove trees of smoke nodes.
Removing the trees translates as decreasing the iterations of some transitions in the realizing sequence of $\dec{H}$.
The trees of smoke nodes that we add to $\dec{H}'$ also do not affect realizability: 
they only increase the iterations of the transitions in the realizing sequence, 
as in the proof of the Replacement Theorem.
\end{proof}

\end{document}